\theoremstyle{plain}
\newtheorem{thm}{\protect\theoremname}
\theoremstyle{remark}
\newtheorem{rem}{\protect\remarkname}
\theoremstyle{plain}
\theoremstyle{definition}
\theoremstyle{plain}
\newtheorem{lem}{\protect\lemmaname}
\newcommand{\be}{\begin{equation}}
\newcommand{\ee}{\end{equation}}
\newcommand{\bey}{\begin{eqnarray}}
\newcommand{\eey}{\end{eqnarray}}
\newcommand{\beyn}{\begin{eqnarray*}}
\newcommand{\eeyn}{\end{eqnarray*}}
\newif\ifApproveEdit
  \newcommand{\DEL}[1]{\iffalse{#1}\fi}
  \newcommand{\DEL}[1]{\st{#1}}
\renewcommand{\baselinestretch}{1.5} % original 1.5
\def\singlespace{\def\baselinestretch{1}\@normalsize}
\newcommand{\hM}{\hat{M}}
\newcommand{\bK}{\bm{K}}
\newcommand{\tbK}{\tilde{\bK}}
\newcommand{\bL}{\bm{L}}
\newcommand{\tbL}{\tilde{\bL}}
\newcommand{\calR}{\mathcal{R}}  %%added by zjt
\newcommand{\calY}{\mathcal{Y}}
\newif\ifApproveEdit
\newcommand{\del}[1]{\iffalse{#1}\fi}
\newcommand\del[2][red]{\setbox0=\hbox{$#2$}%
\rlap{\raisebox{.45\ht0}{\textcolor{#1}{\rule{\wd0}{1pt}}}}#2}
\newcommand{\limsum}{\sum\limits}
\newcommand{\convas}{\stackrel{a.s.}{\rightarrow}}
\newcommand{\calG}{\mathcal{G}}
\newcommand{\calF}{\mathcal{F}}
\newcommand{\calC}{\mathcal{C}}
\newcommand{\tL}{\tilde{L}}
\newcommand{\hN}{\hat{N}}
\newcommand{\HSIC}{\mbox{HSIC}}
\providecommand{\examplename}{Example}
\providecommand{\remarkname}{Remark}
\providecommand{\corollaryname}{Corollary}
\providecommand{\theoremname}{Theorem}
\providecommand{\lemmaname}{Lemma}
\global\long\def\b#1{{\bf \bm{\mathit{#1}}}}
\global\long\def\bA{\b A}
\global\long\def\bB{\b B}
\global\long\def\bH{\b H}
\global\long\def\bI{\b I}
\global\long\def\bJ{\b J}
\global\long\def\bL{\b L}
\global\long\def\bK{\b K}
\global\long\def\bx{\b x}
\global\long\def\by{\b y}
\global\long\def\bz{\b z}
\global\long\def\bGamma{\b{\Gamma}}
\global\long\def\bSigma{\b{\Sigma}}
\global\long\def\calX{\mathcal{X}}
\global\long\def\calT{\mathcal{T}}
\global\long\def\calL{\mathcal{L}}
\global\long\def\h#1{\hat{#1}}
\global\long\def\hd{\h d}
\global\long\def\halpha{\h{\alpha}}
\global\long\def\hbeta{\h{\beta}}
\global\long\def\t#1{\tilde{#1}}
\global\long\def\tK{\t K}
\global\long\def\tT{\t T}
\global\long\def\T{\top}
\global\long\def\tr{\operatorname{tr}}
\global\long\def\diag{\operatorname{diag}}
\global\long\def\E{\operatorname{E}}
\global\long\def\Var{\operatorname{Var}}
\global\long\def\Cov{\operatorname{Cov}}
\global\long\def\convp{\stackrel{p}{\longrightarrow}}
\global\long\def\convL{\stackrel{\calL}{\longrightarrow}}
\global\long\def\convas{\stackrel{a.s.}{\longrightarrow}}
\global\long\def\dequ{\stackrel{d}{=}}
\global\long\def\iidsim{\stackrel{\text{i.i.d.}}{\sim}}
\begin{document}

	\global\long\def\GCBL{\operatorname{GCBL}}

    \author[a]{\small Jin-Ting Zhang}
    \author[b]{\small Tianming Zhu}

    \affil[a]{\footnotesize Department of Statistics and Data Science, National University of Singapore, %\protect\\3 Science Drive 2, Singapore 117546,
    	Singapore}
     \affil[b]{\footnotesize National Institute of Education, Nanyang Technological University, %1 Nanyang Walk,
 	Singapore}

    \title{A fast and accurate kernel-based independence test with applications to  high-dimensional and  functional data}
	\maketitle
	
	\vspace{-1cm}

\begin{abstract}

 Testing the dependency between two random variables is an important  inference problem in statistics since many statistical procedures rely on the assumption  that the two samples are independent.
%With development of data collection techniques, complicated data objects in  some separable metric space  are frequently encountered in various areas.
To test whether two samples  are independent, a so-called HSIC (Hilbert--Schmidt Independence Criterion)-based test    has been proposed. Its null distribution is approximated either
by permutation  or a Gamma approximation. In this paper, a new  HSIC-based  test is proposed.  Its asymptotic null and alternative distributions are  established. It is shown that the proposed test is
root-$n$ consistent.  A three-cumulant matched chi-squared approximation is adopted to approximate the null distribution of the test statistic.    By choosing a proper reproducing kernel, the proposed test can be applied to many different types of data including multivariate,  high-dimensional, and functional data. Three simulation studies and two real data applications show that in terms of level accuracy, power, and computational cost,  the proposed test outperforms several existing tests for multivariate, high-dimensional, and functional data.

\end{abstract}

\noindent{\bf KEY WORDS}: two-sample independence test; HSIC; three-cumulant matched  $\chi^2$-approximation.

\section{Introduction}\label{intro.sec}

With development of data collection techniques, complicated data objects such as high-dimensional data or functional data in  some separable metric spaces  are frequently encountered in various areas. In many big data applications nowadays, we are often interested in measuring the level of association between a pair of potentially high-dimensional random vectors or functional random variables.
 Testing the independence of random elements is an important inference  problem in statistics and has important applications. The work of this study is motivated by
 the Canadian weather data set,  available in the \textsf{R} package \textsf{fda.usc}  and discussed in details by  \cite{ramsay2005functional}. This Canadian weather data set has
 been studied in the literature of multivariate functional data analysis; see \cite{gorecki2017multivariate}  and \cite{zhu2022one}. It contains the average daily temperature curves and
 the average  daily precipitation curves  at $35$ Canadian  weather stations over a year, obtained via averaging the daily temperature curves and the daily precipitation curves yearly
 over the period $1960$ to $1994$.  Of interest is to check whether  the average  daily temperature curves and  the average  daily precipitation curves are statistically independent.  This is a two-sample independence testing  problem for functional data. If the above testing problem is rejected, we may take this dependence  into account in an inference procedure so that it is more efficient.

Mathematically, a two-sample independence testing  problem can be described as follows. Let $x$ and $y$ be two random elements defined in two separable metric spaces $\calX$ and $\calY$, respectively. Suppose we have a paired sample:
\be\label{twosamp.sec1}
(x_i, y_i),\; i=1,\ldots, n,
\ee
with each $(x_i, y_i)\in\calX \times \calY$ independently and identically following the joint  Borel  probability measure  $P_{xy}$. Of interest is to test the following hypotheses:
\be\label{hypo.sec2}
H_0: P_{xy}=P_xP_y, \;\mbox{ versus }\;   H_1: P_{xy}\neq P_xP_y,
\ee
where $P_x$ and $P_y$ be the marginal probability measures of $x$ and $y$, respectively.

There exist some classical dependence measures such as Spearman's $\rho$ and Kerdall's $\tau$ which have been widely applied. However, they are typically designed to capture only particular forms of dependence (e.g. linear or monotone) and they are not able to detect all modes of dependence between  random variables. As availability of complicated data objects, dependence measures are sought that capture more complex dependence patterns and those that occur between high-dimensional datasets or functional datasets.

Hilbert--Schmidt Independence Criterion (HSIC), introduced and studied by \cite{gretton2005measuring, Gretton05aJMLR}, is one of the most successful nonparametric dependence measures. It uses the distance between the kernel embeddings of probability measures in a reproducing kernel Hilbert space (RKHS) (\citealt{NIPS2007_d5cfead9, smola, Zhang2011KernelbasedCI}) which can be used for measuring the dependency between not only univariate or multivariate random variables, but also random variables valued into more complex structures such as  high-dimensional data or functional data. By employing HSIC, \cite{NIPS2007_d5cfead9} proposed  a novel test whose test statistic is an  empirical estimate of HSIC using V-statistics. The authors  approximated the  null distribution of the test statistic by  a two-parameter Gamma distribution. The resulting Gamma approximation based test costs $\mathcal{O}(n^2)$, where $n$ is the sample size. However, the simulation results in Tables~\ref{size1.tab} and~\ref{size2.tab} indicate that the Gamma approximation based  test  works well only  when the dimension $p$ is small and it is very conservative or totally fails to work when  the dimension $p$ is large. \cite{NIPS2007_d5cfead9} also proposed a permutation test  which works well generally in terms of size control but it  costs $\mathcal{O}(mn^2)$ operations, where $m$ is the number of permutations, indicating  that it is about $m$ times more time-consuming than the Gamma approximation based test. This is partially confirmed by Table~\ref{time1.tab} of  Section~\ref{simu1.sec} which shows that the permutation test  is about 100 to 1000  times more time-consuming than  the Gamma approximation based test. To overcome this problem, \cite{ZhangFilippiGretton2018StatComput} introduced three fast estimators of HSIC to speed up the computation in HSIC based tests. However, the computational complexity reduced by their fast estimators is not a free lunch. According to  the simulation results in \cite{ZhangFilippiGretton2018StatComput}, there is some loss in power when the sample size is small.  A much lower computational cost in large-scale examples is then offset by the requirement for a larger sample size.

In recent years, functional data analysis has emerged as an important area of statistics.  Most studies are conducted by assuming  that the random curves are independent without any checking. To overcome  this problem, a few methods have been developed  for detecting the dependency between random curves. Most of these independence tests are based on the measures of correlation including
the classical Pearson correlation (\citealt{pearson1895notes}), the dynamical correlation (\citealt{dubin2005dynamical}),  and the global temporal correlation (\citealt{zhou2018local}) among others.
 However, since zero correlation does not  imply independence generally, these functional correlations may be  insufficient for  independence testing (\citealt{miao2022wavelet}).
 \cite{kosorok2009discussion} applied the distance covariance proposed by \cite{szekely2007measuring} to top FPC scores which cumulatively account for 95\% of the variations of  random functions.
 Unfortunately, as discussed in \cite{shen2019distance} and the simulation results shown in Table~\ref{power3.tab}, for testing the dependence between two random functions, the correlation
 and distance covariance based  tests are less powerful  for non-monotone dependencies, although they are powerful  for monotone relationship.

In this paper, we propose a new HSIC-based test which works well  for multivariate,  high-dimensional, and functional data and it computes very fast. To the best of our knowledge, there are few tests which work well for multivariate,  high-dimensional, and functional data. The main contributions of this work are as follows. First of all, we propose an unbiased and root-$n$ consistent estimator for the centered reproducing kernel used in the proposed test statistic. It gives a good basis  for the proposed new test to have  much better size control than the Gamma approximation based test of \cite{NIPS2007_d5cfead9}.  Second, under some regularity conditions,  we show that under the null hypothesis, the proposed  test statistic has a chi-squared-type mixture limit.  Third, we derive the first three cumulants (mean, variance, and third central moment) of the proposed test statistic. This allows us to  employ  the three-cumulant (3-c) matched $\chi^2$-approximation of \cite{zhang2005approximate} to accurately approximate the distribution of  the chi-squared-type mixture with the approximation parameters consistently estimated from the data.  The 3-c matched $\chi^2$-approximation avoids permutation and significantly reduces the computational cost. It guarantees that the proposed new test computes very fast and has a good size control.  Fourth, we derive the asymptotic power of the proposed  new test under a local alternative and show that it is root-$n$ consistent. To the best of our knowledge, this has not been considered in the literature. Lastly, via three simulation studies and two real data examples, we demonstrate  that in terms of size control, power, and computational cost,  our new test works well and outperforms several existing tests for independence testing for multivariate, high-dimensional,  and functional data.

The rest of this paper is organized as follows. The main results are presented in Section~\ref{main.sec}. Simulation studies and real data applications are given in Sections~\ref{simu_st} and~\ref{real_data.sec}, respectively. Some concluding remarks are given in Section~\ref{con.sec}. Technical proofs of the main results are outlined in the Appendix.

\section{Main results} \label{main.sec}

\subsection{Test statistic}

Let $K(\cdot,\cdot): \calX\times\calX\rightarrow \calR$ and $L(\cdot,\cdot): \calY\times\calY\rightarrow \calR$ be two continuous,  positive characteristic  reproducing  kernels. Let $\calF$ and $\calG$ be the two reproducing kernel Hilbert spaces (RKHS) with inner products $\langle,\rangle_{\calF}$ and $\langle,\rangle_{\calG}$,  generated by $K$ and $L$, respectively.
 Let
 %\be\label{map.sec2}
 $\phi(x)=K(x,\cdot)$ and $\psi(y)=L(y,\cdot)$
 %\ee
  denote their associated canonical feature maps such that we have the following kernel tricks:
%\be\label{ktrick.sec2}
\[
K(x,x')=\langle\phi(x),\phi(x')\rangle_{\calF},\; \mbox{ and }\; L(y,y')=\langle\psi(y),\psi(y')\rangle_{\calG},
%\ee
\]
where $(x',y')$ is an independent copy of $(x,y)$. It follows that $\phi(\calX)\subset \calF$ and $\psi(\calY)\subset\calG$.
Following \cite{fukumizu2004dimensionality}, the cross-covariance operator $\calC_{xy}:\calG\to\calF$ is defined such that for all $f\in\calF$ and $g\in\calG$
\be\label{fgCov.sec2}
\langle f, \calC_{xy}g\rangle_{\calF} =\Cov[f(x),g(y)].
%=\E_{xy}\left\{\{f(x)-\E_x[f(x)]\}\{g(y)-\E_y[g(y)]\}\right\}.
\ee
Set $\mu_{x}=\E_x[\phi(x)]$ and  $\mu_y=\E_y[\psi(y)]$ to be the mean embeddings of the probability measures $P_x$ and $P_y$, respectively. The cross-covariance operator itself can then be written
%\be\label{cov.sec2}
\[
\calC_{xy}=\E_{xy}\{[\phi(x)-\mu_x]\otimes[\psi(y)-\mu_y]\},
\]
%\ee
where $\otimes$ denotes the tensor product.
%Using these two feature maps, we   have the following independent and identically distributed (i.i.d.) induced sample :
%\begin{equation} \label{itwosamp.sec2}
%(u_i,v_i)=(K(x_i,\cdot),L(y_i,\cdot)), \;i=1,\ldots, n,
%\end{equation}
%which is an i.i.d. sample following the distribution of $(u,v)=(K(x,\cdot),L(y,\cdot))$.
%Set $\mu_{u}=\E(u)$ and  $\mu_v=\E(v)$ to be the mean embeddings of the probability measures $P_x$ and $P_y$, respectively. Then the cross-covariance operator  between $u$ and $v$ is given by
%\be\label{cov.sec2}
%\calC_{uv}=\E[(u-\mu_u)\otimes (v-\mu_v)],
%\ee
% where $\otimes$ denotes the tensor product.  Since $u=K(x,\cdot)$ and $v=L(y,\cdot)$, the cross-covariance operator $\calC_{uv} : \calF \rightarrow \calG$  is defined such  that for all $f\in\calF$ and $g\in\calG$, we have
%\be\label{fgCov.sec2}
%\langle f, \calC_{uv}g\rangle_{\calF} =\Cov[f(x),g(y)].
%\ee

According to \citet[Theorem 4]{Gretton05aJMLR}, $x$ and $y$ are independent if and only if $\Cov[f(x),g(y)]=0$ for all continuous, bounded functions $f(x)$ and $g(y)$. Then the expression (\ref{fgCov.sec2}) indicates that $x$ and $y$ are independent if and only if $\calC_{xy}=0$. That is to say,  $x$ and $y$ are independent if and only if their kernel embeddings  $\phi(x)$ and $\psi(y)$ are uncorrelated. Therefore,  to test (\ref{hypo.sec2}) using the original i.i.d. sample (\ref{twosamp.sec1}) is equivalent to test the following hypotheses:
%using the induced i.i.d. sample (\ref{itwosamp.sec2}):
\begin{equation}
\label{ihypo.sec2}
H_0: \calC_{xy}=0, \text{ versus }  H_1: \calC_{xy}\neq 0.
\end{equation}
\cite{NIPS2007_d5cfead9} proposed to measure the dependence between $x$ and $y$ using the following squared  Hilbert--Schmidt-norm of $\calC_{xy}$:
\be\label{HSnorm.sec2}
\begin{array}{rcl}
%\HSIC&=&
\|\calC_{xy}\|^2_{\calF\otimes\calG}	&=&\E_{xyx'y'}\Big[\langle\phi(x) - \mu_x,\phi(x') - \mu_{x'}\rangle_{\calF}\langle\psi(y) - \mu_y,\psi(y')-\mu_{y'}\rangle_{\calG}\Big]\\
&=&\E_{xyx'y'}[\tK(x,x')\tL(y,y')],
\end{array}
\ee
%By (\ref{cov.sec2}) and  some simple algebra, the squared Hilbert--Schmidt norm of $\calC_{xy}$ can be expressed as
%\be\label{HSnorm.sec2}
%\begin{array}{rcl}
%	\|\calC_{xy}\|^2&=&\langle\E[(u - \mu_u)\otimes (v - \mu_v)],\E[(u - \mu_u)\otimes (v - \mu_v)]\rangle_{\calF\otimes \calG}\\
%	&=&\E_{uvu'v'}\langle(u - \mu_u)\otimes (v - \mu_v),(u' - \mu_{u'})\otimes (v' - \mu_{v'})\rangle_{\calF\otimes \calG}\\
%	&=&\E_{uvu'v'}\langle(u - \mu_u),(u' - \mu_{u'})\rangle_{\calF} \langle(v - \mu_v),(v' - \mu_{v'})\rangle_{\calG}\\
%	&=&\E_{xyx'y'}\langle\phi(x) - \E[\phi(x)],\phi(x') - \E[\phi(x')]\rangle_{\calF}\langle\psi(y) - \E[\psi(y)],\psi(y')- \E[\psi(y')]\rangle_{\calG}\\
%	&=&\E_{xyx'y'}[\tK(x,x')\tL(y,y')],
%\end{array}
%\ee
where   $\tK(x,x')$ and  $\tL(y,y')$  denote the centered versions of $K(x,x')$ and $L(y,y')$, respectively, and $(x',y')$ is an independent copy of $(x,y)$. Notice that for  the kernel $K(\cdot,\cdot)$,  we have
\be\label{tK.sec2}
\begin{array}{rcl}
	\tK(x,x')&=&\langle\phi(x) - \mu_x,\phi(x') - \mu_{x'}\rangle_{\calF}\\
             &=&K(x,x')-\E_{z'}[K(x,z')]-\E_{z}[K(z,x')]+\E_{z,z'}[K(z,z')],
\end{array}
\ee
where  $z$ and $z'$ are independent copies of $x$ and $x'$, respectively. Notice also  that we have the following useful properties: when $x'=x$, we have
%\be\label{tKmu1.sec2}
$$\E[\tK(x,x)]=\langle\phi(x) - \mu_x,\phi(x) - \mu_{x}\rangle_{\calF}=\E\|\phi(x)-\mu_x\|_{\calF}^2\ge 0,$$
%\ee
 and when  $x$ and $x'$ are independent, we have
%\be\label{tKmu2.sec2}
$$\E_{x}[\tK(x,x')]=\E_{x'}[\tK(x,x')]=\E_{x,x'}[\tK(x,x')]=0.$$
%\ee
The above properties are valid after replacing $K$ and $x$ with $L$ and $y$, respectively.
Using (\ref{HSnorm.sec2}), to test (\ref{ihypo.sec2}), we can construct the following test statistic
%  based on   $\|\hcalC_{uv}\|^2_{\calF\otimes\calG}$:
%\begin{equation}\label{stat0.sec2}
$$T_n=n^{-1}\sum_{i=1}^n\sum_{j=1}^n\tK^*(x_i,x_j)\tL^*(y_i,y_j),$$
%\end{equation}
where $\tK^*(x_i,x_j)$ and $\tL^*(y_i,y_j)$ are the unbiased estimators of $\tK(x_i,x_j)$ and $\tL(y_i,y_j)$, respectively, which are given by
		\be\label{tKstar.sec2}
		\begin{split}
		\tK^*(x_i,x_j)&=K(x_i,x_j)-\sum_{v\neq i}\frac{K(x_i,x_v)}{n-1}-\sum_{u\neq j}\frac{K(x_u,x_j)}{n-1}+\sum_{u\neq v}\frac{K(x_u,x_v)}{n(n-1)},\;\mbox{ and }\\
		\tL^*(y_i,y_j)&=L(y_i,y_j)-\sum_{v\neq i}\frac{L(y_i,y_v)}{n-1}-\sum_{u\neq j}\frac{L(y_u,y_j)}{n-1}+\sum_{u\neq v}\frac{L(y_u,y_v)}{n(n-1)}.
		\end{split}
		\ee
For simplicity, let $\bK=(K(x_i,x_j)):n\times n$ and  $\bL=(L(y_i,y_j)):n\times n$ denote the Gram matrices of the two kernels   $K(\cdot,\cdot)$ and $L(\cdot,\cdot)$, respectively. Similarly,
set $\tbK=(\tK(x_i,x_j)):n\times n$ and  $\tbL=(\tL(y_i,y_j)):n\times n$, and $\tbK^*=(\tK^*(x_i,x_j)):n\times n$ and  $\tbL^*=(\tL^*(y_i,y_j)):n\times n$. %Then we have $\tK^*=HKH$ and $\tL^*=HLH$ where $H=I_n-J_n/n$  with $I_n$ and $J_n$ being the $n\times n$ identical matrix and the $n\times n$ matrix of ones.
Then we have
\be\label{stat1.sec2}
T_n=n^{-1}\tr(\tbK^*\tbL^*),%=n^{-1}\tr(HKHL),
\ee
where $\tr(\bA)$ denotes the trace of the square matrix $\bA$, i.e., the sum of the diagonal entries of $\bA$. Note that $T_n$  can be easily computed using $\mathcal{O}(n^2)$ operations.

\begin{rem}\label{Gstat0.rem} To test (\ref{ihypo.sec2}), \cite{NIPS2007_d5cfead9}  proposed the following  test statistic:
	\be\label{Gstat0.sec2}
	T_{n,G}=n^{-1}\tr(\bH\bK\bH\bL)=n^{-1}\tr(\bK_G\bL_G)=n^{-1}\sum\limits_{i=1}^n\sum\limits_{j=1}^n K_G(x_i,x_j)L_G(y_i,y_j),
	\ee
	where $\bH=\bI_n-\bJ_n/n$  with $\bI_n$ and $\bJ_n$ being the $n\times n$ identical matrix and the $n\times n$ matrix of ones, and $\bK_G=(K_G(x_i,x_j)):n\times n$ and  $\bL_G=(L_G(y_i,y_j)):n\times n$ with
	 $K_G(x_i,x_j)$ and $L_G(y_i,y_j)$ being  the biased estimators of  $\tK(x_i,x_j)$ and $\tL(y_i,y_j)$ defined as
\be\label{GKstar.sec2}
	\begin{array}{rcl}
	K_G(x_i,x_j)&=&K(x_i,x_j)-\sum\limits_{u=1}^n  \frac{K(x_i, x_{u})}{n} -\sum\limits_{v=1}^n  \frac{K(x_{v}, x_j)}{n}+\sum\limits_{v=1}^n \sum\limits_{u=1}^n \frac{K(x_{v}, x_{u})}{n^2},\;\mbox{ and }\\
	L_G(y_i,y_j)&=&L(y_i,y_j)-\sum\limits_{u=1}^n  \frac{L(y_i, y_{u})}{n} -\sum\limits_{v=1}^n  \frac{L(y_{v}, y_j)}{n}+\sum\limits_{v=1}^n \sum\limits_{u=1}^n \frac{L(y_{v}, y_{u})}{n^2}.
	\end{array}
	\ee
 It is worthwhile to emphasize that the differences between  our test statistic $T_n$ (\ref{stat1.sec2}) and \cite{NIPS2007_d5cfead9}'s test statistic $T_{n,G}$ (\ref{Gstat0.sec2}) mainly come from the differences between the unbiased estimators (\ref{tKstar.sec2}) and the biased estimators (\ref{GKstar.sec2}) of $\tK(x_i,x_j)$ and $\tL(y_i,y_j)$.
\end{rem}

\subsection{Asymptotic null distribution}

Assume  that
\be\label{cond.sec2}
|K(x,x')|<B_K<\infty, \; \mbox{ and } \;|L(y,y')|<B_L<\infty  \mbox { for  all $x,x'\in\calX$ and $y,y'\in\calY$},
\ee
where $B_K$ and $B_L$ are two constants. Condition (\ref{cond.sec2}) guarantees that $|\tK(x,x')|\le 4B_K$, $|\tL(y,y')|\le 4B_L$, $\E[\tK(x,x)]<\infty$, and $\E[\tL(y,y)]<\infty$.
Then by the Cauchy--Schwarz inequality, both $\tK(x,x')$ and $\tL(y,y')$  are square integrable, i.e., $\E[\tK^2(x,x')]<\infty$ and $\E[\tL^2(y,y')]<\infty$. Thus $\tK(x,x')$ and $\tL(y,y')$  have the following Mercer's expansions
\be\label{mercer.sec2}
\tK(x,x')=\sum\limits_{r=1}^{\infty} \lambda_r \phi_r(x)\phi_r(x'),\; \mbox{ and } \; \tL(y,y')=\sum\limits_{r=1}^{\infty} \rho_r \psi_r(y)\psi_r(y'),
\ee
where $\lambda_1, \lambda_2, \dots $ and $\rho_1, \rho_2, \dots $ are the eigenvalues of $\tK(x,x')$ and $\tL(y,y')$, respectively, and with a slight abuse of notation,  $\phi_1(x),\phi_2(x),\dots$ and $\psi_1(y),\psi_2(y),\dots$ are the associated orthonormal eigen-elements  of $\tK(x,x')$ and $\tL(y,y')$, respectively in the sense that
%\be\label{orthnorm.sec2}
\[
\begin{array}{rcl}
\int_{\calX} \tK(x,x') \phi_r(x) P_x(dx) =\lambda_r \phi_r(x'), &&\;\;  \int_{\calX}\phi_{r}(x) \phi_{s}(x) P_x(dx)=\delta_{rs}, \; r,s =1,2,\dots, \;\\
\int_{\calY} \tL(y,y') \psi_r(y) P_y(dy) =\rho_r \psi_r(y'),&& \;\;  \int_{\calY}\psi_{r}(y) \psi_{s}(y) P_y(dy)=\delta_{rs}, \; r,s =1,2,\dots, \;\\
\end{array}
\]
%\ee
where $\delta_{rs}=1$ when $r=s$ and $0$ otherwise.
To derive the asymptotic null distribution of $T_n$, we need the following lemma whose proof  is given in the Appendix.
\begin{lem}\label{lem1.sec2}
Under the condition (\ref{cond.sec2}),  as $n\to\infty$,  we have
\[
\tK^*(x_i,x_j)=\tK(x_i,x_j)+\mathcal{O}(n^{-1/2})\; \mbox{uniformly for all } x_i,x_j.
\]
\end{lem}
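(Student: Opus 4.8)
The plan is to expand $\tK^*(x_i,x_j)-\tK(x_i,x_j)$ into a sum of centred empirical-minus-population terms and then to bound each term using only the boundedness condition (\ref{cond.sec2}). Comparing (\ref{tKstar.sec2}) with (\ref{tK.sec2}), the leading terms $K(x_i,x_j)$ cancel, leaving
\[
\tK^*(x_i,x_j)-\tK(x_i,x_j)=A_i+B_j+C_n,
\]
where $A_i=\E_{z'}[K(x_i,z')]-\tfrac{1}{n-1}\sum_{v\neq i}K(x_i,x_v)$, $B_j=\E_{z}[K(z,x_j)]-\tfrac{1}{n-1}\sum_{u\neq j}K(x_u,x_j)$, and $C_n=\tfrac{1}{n(n-1)}\sum_{u\neq v}K(x_u,x_v)-\E_{z,z'}[K(z,z')]$; note that $C_n$ does not depend on $(i,j)$.

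First I would control $A_i$, and then $B_j$ by the identical argument. Conditioning on $x_i$, the variables $K(x_i,x_v)$ with $v\neq i$ are i.i.d., independent of $x_i$, have conditional mean $\E_{z'}[K(x_i,z')]$, and lie in $[-B_K,B_K]$ by (\ref{cond.sec2}); hence $A_i$ is a centred average of $n-1$ i.i.d. bounded variables, so $\E[A_i^2\mid x_i]=(n-1)^{-1}\Var[K(x_i,x_v)\mid x_i]\le B_K^2/(n-1)$ and therefore $\E[A_i^2]\le B_K^2/(n-1)$. The crucial point is that the bound $B_K^2/(n-1)$ is free of the value of $x_i$ --- this is exactly what ``uniformly'' means here --- so $A_i=\mathcal{O}_p(n^{-1/2})$ and, likewise, $B_j=\mathcal{O}_p(n^{-1/2})$, with implied constants not depending on $x_i,x_j$.

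Next I would treat $C_n$, which is a degree-two $U$-statistic in the symmetric and (by (\ref{cond.sec2})) bounded kernel $K(\cdot,\cdot)$; Hoeffding's classical variance bound for bounded-kernel $U$-statistics gives $\E[C_n^2]=\Var(C_n)=\mathcal{O}(n^{-1})$. Combining the three pieces by the triangle inequality in $L^2$ then yields $\|\tK^*(x_i,x_j)-\tK(x_i,x_j)\|_{L^2}\le\|A_i\|_{L^2}+\|B_j\|_{L^2}+\|C_n\|_{L^2}=\mathcal{O}(n^{-1/2})$ with a constant independent of $(i,j)$, which is the assertion of the lemma. The same reasoning applies verbatim to $\tL^*$ and $\tL$ with $K,x,B_K$ replaced by $L,y,B_L$.

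The hard part is really just making the notion of uniformity precise and matching the rate to it. Under the reading used above --- an $L^2$ (equivalently $\mathcal{O}_p$) bound holding with a single constant valid for every pair $(x_i,x_j)$ --- the boundedness assumption (\ref{cond.sec2}) does all the work, since it yields conditional-variance bounds that do not involve the conditioning value. If instead one wants an honest supremum bound on $\max_{1\le i,j\le n}|\tK^*(x_i,x_j)-\tK(x_i,x_j)|$, the same decomposition combined with Hoeffding's exponential inequality applied to each $A_i$ and $B_j$ together with a union bound over the $n$ indices delivers the slightly weaker rate $\mathcal{O}_p(\sqrt{(\log n)/n})$; I would be careful to use whichever of these two forms is actually needed when the lemma is invoked later. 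Everything else is routine bookkeeping.
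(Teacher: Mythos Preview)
Your proof is correct and follows the same three-term decomposition and boundedness argument as the paper's own proof. Your treatment is in fact more careful: you make explicit the $L^2$/$\mathcal{O}_p$ sense in which the bound is uniform in $(i,j)$, and your rate $\mathcal{O}(n^{-1/2})$ for the $C_n$ term is the correct one (the paper asserts $\mathcal{O}(n^{-1})$ for that piece via a variance bound on $nC_n$ that does not actually hold for a non-degenerate $U$-statistic, though this slip is harmless for the lemma's conclusion).
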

Lemma~\ref{lem1.sec2} gives the uniform convergence rate of $\tK^*(x_i,x_j)$ to $\tK(x_i,x_j)$. Similarly, we can also have $\tL^*(y_i,y_j)=\tL(y_i,y_j)+\mathcal{O}(n^{-1/2})$ uniformly for all $y_i,y_j$. Therefore, we can write that
	\be\label{tTn2.sec2}
	T_n=n^{-1}\tr(\tbK\tbL)+\mathcal{O}(n^{-1/2})\equiv \tT_n+\mathcal{O}(n^{-1/2}),
	\ee
	 That is, $T_n$ and $\tT_n$ have the same distribution for large values of $n$. Thus, studying the asymptotic null distribution of $T_n$ is equivalent to studying that of $\tT_n$.

\begin{thm} \label{tTndist2.thm} Under the condition (\ref{cond.sec2}) and the null hypothesis, as $n\to\infty$, we have   $\tT_n \convL \tT$ and $T_n\convL\tT$ with
%\be\label{tTnlim.sec2}
$$\tT \dequ  \sum\limits_{r=1}^{\infty} \sum\limits_{s=1}^{\infty} \lambda_r\rho_s  A_{rs}, \;  A_{rs} \stackrel{i.i.d.}{\sim} \chi_{1}^2,$$
%\ee
where $\lambda_1, \lambda_2, \dots $ and $\rho_1, \rho_2, \dots $ are the eigenvalues of $\tK(x,x')$ and $\tL(y,y')$, respectively.
\end{thm}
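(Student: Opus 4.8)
The plan is to diagonalize $\tT_n=n^{-1}\tr(\tbK\tbL)$ via the Mercer expansions (\ref{mercer.sec2}) and reduce the statement to a finite-dimensional central limit theorem plus a truncation argument. Substituting (\ref{mercer.sec2}) into $\tT_n=n^{-1}\sum_{i,j}\tK(x_i,x_j)\tL(y_i,y_j)$ and interchanging the finite summation over $i,j$ with the series gives, after setting $W_{n,rs}=n^{-1/2}\sum_{i=1}^{n}\phi_r(x_i)\psi_s(y_i)$,
\[
\tT_n=\sum_{r=1}^{\infty}\sum_{s=1}^{\infty}\lambda_r\rho_s\,W_{n,rs}^2 .
\]
First I would record the moment structure of the $W_{n,rs}$. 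From the eigen-equation together with $\E_x[\tK(x,x')]=0$ one obtains $\E[\phi_r(x)]=0$ for every $r$ with $\lambda_r\neq 0$ (and likewise $\E[\psi_s(y)]=0$ for $\rho_s\neq0$), while orthonormality gives $\E[\phi_r(x)\phi_{r'}(x)]=\delta_{rr'}$ and $\E[\psi_s(y)\psi_{s'}(y)]=\delta_{ss'}$. Under $H_0$ the pair $(x_i,y_i)$ has independent coordinates, so the variables $\xi_{i,rs}:=\phi_r(x_i)\psi_s(y_i)$ are i.i.d. in $i$ with $\E[\xi_{i,rs}]=0$, $\Var[\xi_{i,rs}]=1$, and $\Cov[\xi_{i,rs},\xi_{i,r's'}]=\delta_{rr'}\delta_{ss'}$. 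Moreover $\tK$ and $\tL$ are positive semidefinite (being Gram kernels of the centered feature maps), so all $\lambda_r\ge0$, $\rho_s\ge0$, and $\sum_r\lambda_r=\E[\tK(x,x)]<\infty$, $\sum_s\rho_s=\E[\tL(y,y)]<\infty$; these summability facts are what the truncation step will use.

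For fixed $M$, the classical multivariate CLT applied to the i.i.d.\ mean-zero vectors $(\xi_{i,rs})_{1\le r,s\le M}$ yields $(W_{n,rs})_{r,s\le M}\convL (Z_{rs})_{r,s\le M}$ with $Z_{rs}\iidsim\calN(0,1)$, and the continuous mapping theorem then gives
\[
\tT_{n,M}:=\sum_{r,s\le M}\lambda_r\rho_s\,W_{n,rs}^2 \;\convL\; \tT_M:=\sum_{r,s\le M}\lambda_r\rho_s\,A_{rs},\qquad A_{rs}=Z_{rs}^2\iidsim\chi^2_1 .
\]
Since the summands are nonnegative and $\E[\tT_M]=\sum_{r,s\le M}\lambda_r\rho_s\uparrow(\sum_r\lambda_r)(\sum_s\rho_s)<\infty$, monotone convergence gives $\tT_M\uparrow\tT$ almost surely, with $\tT$ finite a.s.

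It remains to control the truncation error uniformly in $n$. Because $\E[W_{n,rs}^2]=1$ for every $n$, one has $\E[\tT_n-\tT_{n,M}]=\sum_{r>M\text{ or }s>M}\lambda_r\rho_s$, the tail of a convergent nonnegative double series, hence $\to0$ as $M\to\infty$ independently of $n$. As $\tT_n-\tT_{n,M}\ge0$, Markov's inequality gives $\limsup_n \Pr(\tT_n-\tT_{n,M}>\varepsilon)\le\varepsilon^{-1}\sum_{r>M\text{ or }s>M}\lambda_r\rho_s\to0$. The three ingredients — $\tT_{n,M}\convL\tT_M$ for each $M$, $\tT_M\to\tT$, and the uniform smallness just shown — combine through the standard approximation lemma for weak convergence (e.g.\ Billingsley, Thm.\ 3.2) to give $\tT_n\convL\tT$. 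Finally, (\ref{tTn2.sec2}) (a consequence of Lemma~\ref{lem1.sec2}) gives $T_n=\tT_n+\mathcal{O}(n^{-1/2})$, so Slutsky's theorem yields $T_n\convL\tT$ as well.

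I expect the main obstacle to be making the passage from the Mercer series to the limit in $n$ fully rigorous: one must justify that the infinite series defining $\tT_n$ converges (say in $L^1$ and almost surely) and, crucially, that the remainder past the $M$-truncation is negligible \emph{uniformly} in $n$ — the finite-$M$ CLT is entirely routine. One should also take care to discard the indices with $\lambda_r=0$ or $\rho_s=0$ before invoking the mean-zero property of the corresponding eigenfunctions.
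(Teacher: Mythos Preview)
Your proof is correct but takes a genuinely different route from the paper's. The paper does not diagonalize $\tT_n$ via Mercer; instead it splits $\tT_n$ into the diagonal piece $n^{-1}\sum_i\tK(x_i,x_i)\tL(y_i,y_i)$ and the off-diagonal U-statistic $(n-1)\binom{n}{2}^{-1}\sum_{i<j}\tK(x_i,x_j)\tL(y_i,y_j)$, handles the first by the strong law (limit $(\sum_r\lambda_r)(\sum_s\rho_s)$), and for the second observes that under $H_0$ the kernel $h((x,y),(x',y'))=\tK(x,x')\tL(y,y')$ is degenerate with eigenvalues $\{\lambda_r\rho_s\}$ and eigenfunctions $\{\phi_r\psi_s\}$, so Serfling's limit theorem for degenerate order-$2$ U-statistics delivers $\sum_{r,s}\lambda_r\rho_s(A_{rs}-1)$ in one stroke. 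Your approach---explicit Mercer diagonalization $\tT_n=\sum_{r,s}\lambda_r\rho_s W_{n,rs}^2$, a finite-block multivariate CLT, and a uniform-in-$n$ first-moment truncation bound---is exactly the machinery that underlies Serfling's theorem, so you are in effect reproving the relevant special case rather than citing it. The paper's route is shorter and citation-based; yours is more self-contained and makes the spectral mechanism explicit. On the interchange you flag as the main obstacle: since $\tK$ is positive semidefinite and trace-class, $\tK(x,x)=\sum_r\lambda_r\phi_r(x)^2$ holds $P_x$-a.e., and Cauchy--Schwarz then gives absolute a.e.\ convergence of the off-diagonal Mercer series, so the identity $\tT_n=\sum_{r,s}\lambda_r\rho_s W_{n,rs}^2$ holds almost surely and the rest of your argument goes through cleanly.
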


\begin{rem}\label{Gstat1.rem} Theorem~\ref{tTndist2.thm} is parallel to Theorem 2 of \cite{NIPS2007_d5cfead9} where the authors treated $T_{n,G}/n$ (See Remark~\ref{Gstat0.rem}) as a V-statistic of an order $4$ kernel  while actually we can show that  $T_{n,G}/n=\tT_n/n+\mathcal{O}(n^{-3/2})$ where   $\tT_n/n$ is a V-statistic of an order $2$ kernel  only [see (\ref{tTn2.sec2}) for details]. Theorem~\ref{tTndist2.thm} is the same as Theorem 1 of \citet{ZhangFilippiGretton2018StatComput} but our proof is much simpler than that of the latter.
\end{rem}

\subsection{Null distribution approximation \label{implement.sec}}

As mentioned in the introduction section, \cite{NIPS2007_d5cfead9} approximated the  null distribution of $T_{n,G}$ (\ref{Gstat0.sec2})  by permutation and by a two-parameter Gamma distribution, resulting in a permutation test and a Gamma approximation based test. In this subsection, since $\tT$  is  a $\chi^2$-type mixture with unknown coefficients,  we  approximate the null distribution of $T_n$  using the three-cumulant (3-c) matched $\chi^2$-approximation (\citealt{zhang2005approximate}, \citealt{zhang2013analysis}), resulting in a 3-c matched $\chi^2$-approximation based new test.

 \begin{rem}\label{Gapprox.rem}
 In terms of computational cost,  the permutation test is  very  time-consuming with a cost $\mathcal{O}(mn^2)$ where $m$ is the number of permutations and $n$ is the sample size while the Gamma approximation based test computes very fast, with a cost of $\mathcal{O}(n^2)$. However, in terms of size control, the permutation test generally performs quite well but  the Gamma approximation based test  performs well only for low dimensional data and it is very conservative and totally fails to work for high-dimensional data,   as demonstrated by the simulation results presented in Tables~\ref{size1.tab} and~\ref{size2.tab} of  Section~\ref{simu1.sec}.
 \end{rem}

The key idea of the proposed new test  is to approximate the null distribution of $T_n$ using that of the following random variable of form
%\be\label{Rstat3c.sec2}
$$R\dequ \beta_0+\beta_1\chi_d^2.$$
%\ee
 The parameters $\beta_0, \beta_1$, and $d$ are determined via matching the first three cumulants of $T_n$ and $R$. For this purpose, we  derive the first three cumulants of $T_n$ as in the following theorem whose proof  is given in the Appendix.

\begin{thm}\label{muvar2.thm} Under the condition (\ref{cond.sec2}) and the  null hypothesis, the first three cumulants of $T_n$ are given by
\be\label{tT3c.sec2}
\begin{array}{c}
\E(T_n)=M_1N_1+\mathcal{O}(n^{-1/2}),\;\;
\Var(T_n)=2M_2N_2+\mathcal{O}(n^{-1/2}), \; \mbox{ and }\\
\E[T_n-\E(T_n)]^3=8M_3N_3+\mathcal{O}(n^{-1/2}),
\end{array}
\ee
where with $x,x',x''\iidsim P_x$ and $y,y',y''\iidsim P_y$,
  \be\label{MNexp.sec2}
  \begin{array}{llll}
   &M_1=\E[\tK(x,x)],\; &M_2=\E[\tK^2(x,x')], \; &M_3=\E[\tK(x,x')\tK(x',x'')\tK(x'',x)],\\
    &N_1=\E[\tL(y,y)],\; &N_2=\E[\tL^2(y,y')], \; &N_3=\E[\tL(y,y')\tL(y',y'')\tL(y'',y)].
    \end{array}
 \ee
\end{thm}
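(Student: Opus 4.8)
The plan is to sidestep $T_n$ and instead compute the first three cumulants of $\tT_n = n^{-1}\tr(\tbK\tbL)$: by (\ref{tTn2.sec2}) we have $T_n = \tT_n + \mathcal{O}(n^{-1/2})$ uniformly, so, since condition (\ref{cond.sec2}) forces $\tT_n$ to be bounded in $L^2$, the cumulants of $T_n$ will agree with those of $\tT_n$ up to $\mathcal{O}(n^{-1/2})$ once the cross-terms in the expansions of $\E T_n$, $\Var T_n$ and $\E[T_n-\E T_n]^3$ are controlled by H\"older's inequality (the sup-norm of $T_n-\tT_n$ times a fixed moment of $\tT_n-\E\tT_n$). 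The substance is therefore the $\tT_n$ computation.

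I would write $g_{ij} = \tK(x_i,x_j)\tL(y_i,y_j)$ and split $\tT_n = n^{-1}\sum_i g_{ii} + n^{-1}\sum_{i\neq j}g_{ij}$. Two facts drive everything. First, under $H_0$ the $2n$ variables $x_1,\dots,x_n,y_1,\dots,y_n$ are mutually independent, so any expectation of a product of $\tK$'s and $\tL$'s factorizes into an $x$-part times a $y$-part. Second, the centering identities $\E_x[\tK(x,x')]=\E_{x'}[\tK(x,x')]=0$ (and the analogues for $\tL$) make $g_{ij}$ a completely degenerate kernel for $i\neq j$, i.e. $\E[g_{ij}\mid x_i,y_i]=0$; consequently, in any expectation of a product of $g$'s, a summand vanishes as soon as some index appears in only one factor and only once there. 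Thus $n^{-1}\sum_{i\neq j}g_{ij}$ behaves like an $n$-scaled degenerate second-order U-statistic, while $n^{-1}\sum_i g_{ii}$ is an i.i.d.\ average concentrating at $\E[g_{11}]=\E[\tK(x,x)]\E[\tL(y,y)]=M_1N_1$ with $\mathcal{O}(n^{-1/2})$ fluctuations.

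For the mean this already gives $\E[\tT_n]=M_1N_1$ (the off-diagonal terms have mean $0$). For the variance: degeneracy makes the covariance between the diagonal and off-diagonal parts vanish exactly, the diagonal part contributes only $\mathcal{O}(n^{-1})$, and in $\sum_{i\neq j}\sum_{k\neq l}\E[g_{ij}g_{kl}]$ a summand survives only when $\{i,j\}=\{k,l\}$, each survivor equalling $\E[g_{ij}^2]=\E[\tK^2(x,x')]\E[\tL^2(y,y')]=M_2N_2$; with $2n(n-1)$ survivors this yields $\Var(\tT_n)=2M_2N_2+\mathcal{O}(n^{-1})$. For the third central moment I would put $\tT_n-\E\tT_n = D_n+n^{-1}S_n$ with $D_n=n^{-1}\sum_i(g_{ii}-M_1N_1)$ and $S_n=\sum_{i\neq j}g_{ij}$, expand the cube, and check that every term containing a factor $D_n$ is $\mathcal{O}(n^{-1})$ by the same degeneracy-and-counting bookkeeping. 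The main term is $n^{-3}\E[S_n^3]=n^{-3}\sum_{i\neq j}\sum_{k\neq l}\sum_{p\neq q}\E[g_{ij}g_{kl}g_{pq}]$; by degeneracy only tuples in which every index occurs at least twice among the six slots contribute, and under the constraints $i\neq j$, $k\neq l$, $p\neq q$ these are exactly (i) the three pairs $\{i,j\},\{k,l\},\{p,q\}$ all coinciding, giving $\mathcal{O}(n^2)$ tuples and hence $\mathcal{O}(n^{-1})$, and (ii) the three pairs being the three edges of a triangle on three distinct vertices, of which there are $8n(n-1)(n-2)$ ordered tuples, each (by symmetry of $\tK,\tL$ and the factorization) with expectation $\E[\tK(x,x')\tK(x',x'')\tK(x'',x)]\E[\tL(y,y')\tL(y',y'')\tL(y'',y)]=M_3N_3$. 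Hence $\E[(\tT_n-\E\tT_n)^3]=8M_3N_3+\mathcal{O}(n^{-1})$, and the transfer back to $T_n$ gives (\ref{tT3c.sec2}).

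The hardest part will be the combinatorics behind (ii): one must verify that imposing the constraints $i\neq j$, $k\neq l$, $p\neq q$ (no index repeated within a single factor) leaves the triangle as the unique coincidence pattern surviving at order $n^3$, with every other admissible pattern $\mathcal{O}(n^2)$, and keep the symmetry-induced multiplicities straight so that the constant $8$ emerges correctly. The remaining work — the $\tT_n\to T_n$ passage and the $D_n$ cross-terms — is routine: the passage needs only $\Var(\tT_n)=\mathcal{O}(1)$, obtained above, together with (\ref{tTn2.sec2}) and the boundedness condition (\ref{cond.sec2}), which is what makes every expectation in sight finite and every $\mathcal{O}(\cdot)$ bound uniform.
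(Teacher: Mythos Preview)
Your proposal is correct and follows essentially the same route as the paper's proof: reduce to $\tT_n$ via Lemma~\ref{lem1.sec2}, split $\tT_n$ into its diagonal i.i.d.\ average and its off-diagonal degenerate $U$-part, use the centering identities to kill all but the ``fully matched'' index patterns, and recover the constant $8$ from the triangle count. The only cosmetic difference is that you sum over ordered pairs $i\neq j$ (so the $8$ appears as $8n(n-1)(n-2)$ directly), whereas the paper sums over $i<j$ and carries an explicit factor $2$; you are also slightly more careful than the paper in noting that the $D_n$--$S_n$ cross terms in the cube are $\mathcal{O}(n^{-1})$ rather than identically zero.
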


%\cite[Theorems 3 and 4]{NIPS2007_d5cfead9} derived the mean and variance of $T_n/n$ which are similar to the mean and variance of $T_n/n$ but they did not obtain the third cumulant of $T_n/n$.

 The first three cumulants of $R$ are given by  $\beta_0+\beta_1 d$, $2\beta_1^2 d$, and $8\beta_1^3 d$ while the first three cumulants of $T_n$ are given in (\ref{tT3c.sec2}).
  Equating  the first three-cumulants of $T_n$ and $R$ and ignoring the higher order terms  then leads to
  \be\label{betadf.sec2}
\beta_0=M_1N_1-\frac{(M_2N_2)^2}{M_3N_3},\;\;
\beta_1=\frac{M_3N_3}{M_2N_2},\;\;\mbox{ and }\;
d=\frac{(M_2N_2)^3}{(M_3N_3)^2}.
\ee
 In addition,  the skewness of $T_n$ can also  be approximately expressed as
%\be\label{skewness.sec2}
$$\frac{\E[T_n-\E(T_n)]^3}{\Var^{3/2}(T_n)}=\frac{8M_3N_3}{(2M_2N_2)^{3/2}}=\sqrt{8/d}.$$
%\ee
Thus the skewness of $T_n$  will become small as $d$ increases.

Let $\hM_1, \hM_{2}, \hM_3$ and $\hN_1,\hN_2,\hN_{3}$  be the consistent estimators of $M_1, M_{2}, M_3$ and $N_1, N_2, N_3$ respectively.  Plugging  them into (\ref{betadf.sec2}), the consistent estimators of $\beta_0,\beta_1$ and $d$ are then obtained as
\be\label{hbetadf.sec2}
\hbeta_0=\hM_1\hN_1-\frac{(\hM_2\hN_2)^2}{\hM_3\hN_3},\;\;
\hbeta_1=\frac{\hM_3\hN_3}{\hM_2\hN_2},\;\mbox{ and }\;
\hd=\frac{(\hM_2\hN_2)^3}{(\hM_3\hN_3)^2}.
\ee
Then for any nominal significance level $\alpha>0$, let
$\chi_{d}^2(\alpha)$ denote the upper $100\alpha$ percentile of
$\chi_{d}^2$. Then using  (\ref{hbetadf.sec2}),  the proposed new test  with   the  3-c matched  $\chi^2$-approximation can   then be  conducted via using  the approximate critical
value $\hbeta_0+\hbeta_1\chi_{\hd}^2(\alpha)$ or  the approximate $p$-value
$P\{\chi_{\hd}^2\ge (T_n-\hbeta_0)/\hbeta_1\}$.

Under the condition (\ref{cond.sec2}), by Lemma~\ref{lem1.sec2}, we have $\tK^*(x_i,x_j)=\tK(x_i,x_j)+\mathcal{O}(n^{-1/2})$ uniformly for all $x_i,x_j$'s, and  $\tL^*(y_i,y_j)=\tL(y_i,y_j)+\mathcal{O}(n^{-1/2})$ uniformly for all $y_i,y_j$'s.
	Then
by (\ref{MNexp.sec2}),  the natural  estimators of $M_1,M_2,$ $M_3$ and $N_1,N_2,$ $N_3$ are given by
\be\label{hMNexp1.sec2}
\begin{array}{c}
\hM_1=\frac{1}{n}\sum\limits_{i=1}^n \tK^*(x_i,x_i),\;\;\;
\hM_2=\frac{2}{n(n-1)} \sum\limits_{1\le i<j\le n} [\tK^*(x_i,x_j)]^2,\\
\hM_3=\frac{6}{n(n-1)(n-2)} \sum\limits_{1\le i<j<k\le n}[\tK^*(x_i,x_j)\tK^*(x_j,x_k)\tK^*(x_k,x_i)],
\end{array}
\ee
and
%\be\label{hMNexp2.sec2}
\[
\begin{array}{c}
\hN_1=\frac{1}{n}\sum\limits_{i=1}^n \tL^*(y_i,y_i),\;\;
\hN_2=\frac{2}{n(n-1)} \sum\limits_{1\le i<j\le n} [\tL^*(y_i,y_j)]^2,\\
\hN_3=\frac{6}{n(n-1)(n-2)} \sum\limits_{1\le i<j<k\le n}[\tL^*(y_i,y_j)\tL^*(y_j,y_k)\tL^*(y_k,y_i)],
\end{array}
\]
%\ee
where $\tK^*(x_i,x_j)$ and $\tL^*(y_i,y_j)$ are defined in (\ref{tKstar.sec2}). For fast computation, we can write $\hM_1=\tr(\tbK^*)/n$,
\be\label{M23.sec2}
\begin{array}{rcl}
\hM_2&=&[n(n-1)]^{-1}\left[\tr(\tbK^{*2})-\tr(\tbK^*o\tbK^*)\right],\;\mbox{ and }\\
\hM_3&=&[n(n-1)(n-2)]^{-1}\left\{\tr(\tbK^{*3})-3\tr[\diag(\tbK^*)\tbK^{*2}]+2\tr(\tbK^*o \tbK^* o\tbK^*)\right\},
\end{array}
\ee
where $\bA o\bB=(a_{ij}b_{ij})$  denotes a dot product of two matrices $\bA=(a_{ij})$ and $\bB=(b_{ij})$, and $\diag(\bA)$ denotes a diagonal matrix formed by the diagonal entries of $\bA$.
 The proof of (\ref{M23.sec2}) is given in the Appendix.  Similarly, we have $\hN_1=\tr(\tbL^*)/n$, $\hN_2=[n(n-1)]^{-1}[\tr(\tbL^{*2})-\tr(\tbL^*o\tbL^*)]$, and
\[
\hN_3=[n(n-1)(n-2)]^{-1}\left\{\tr(\tbL^{*3})-3\tr[\diag(\tbL^*)\tbL^{*2}]+2\tr(\tbL^*o\tbL^*o\tbL^*)\right\}.
\]
\begin{thm}\label{hcoefs.thm} Under the condition (\ref{cond.sec2}), as $n\to \infty$, we have $\hat{M}_{\ell} \stackrel{p}{\longrightarrow} M_{\ell}, \ell=1,2,3$ and $\hat{N}_{\ell} \stackrel{p}{\longrightarrow} N_{\ell}, \ell=1,2,3$. It follows that as $n\to \infty$, we have
$
\hat{\beta}_0 \stackrel{p}{\longrightarrow} \beta_0, \hat{\beta}_1\stackrel{p}{\longrightarrow} \beta_1$, and $\hat{d} \stackrel{p}{\longrightarrow} d.$
\end{thm}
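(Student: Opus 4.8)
The plan is to establish $\hM_\ell\convp M_\ell$ and, symmetrically, $\hN_\ell\convp N_\ell$ for $\ell=1,2,3$ in two stages --- first replacing the data-dependent kernel $\tK^*$ by the population kernel $\tK$, then applying the weak law of large numbers for $U$-statistics --- and then to deduce the convergence of $\hbeta_0,\hbeta_1,\hd$ by the continuous mapping theorem.

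First I would introduce the ``oracle'' estimators obtained from (\ref{hMNexp1.sec2}) by substituting $\tK$ for $\tK^*$:
\[
\tilde M_1=\frac{1}{n}\sum_{i=1}^n\tK(x_i,x_i),\qquad
\tilde M_2=\frac{1}{n(n-1)}\sum_{i\ne j}\tK^2(x_i,x_j),
\]
\[
\tilde M_3=\frac{1}{n(n-1)(n-2)}\sum_{i,j,k\ \text{distinct}}\tK(x_i,x_j)\tK(x_j,x_k)\tK(x_k,x_i),
\]
where I have rewritten the sums over ordered index tuples, which is legitimate because $\tK$ is symmetric and the triangle product $\tK(x_i,x_j)\tK(x_j,x_k)\tK(x_k,x_i)$ is invariant under all permutations of $(i,j,k)$. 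Condition (\ref{cond.sec2}) gives $|\tK|\le 4B_K$, and inspection of (\ref{tKstar.sec2}) gives the same bound $|\tK^*|\le 4B_K$; combined with Lemma~\ref{lem1.sec2}, which says $\tK^*(x_i,x_j)=\tK(x_i,x_j)+\mathcal{O}(n^{-1/2})$ uniformly in $(x_i,x_j)$, an elementary expansion of a product of two or three uniformly bounded factors each differing by $\mathcal{O}(n^{-1/2})$ shows that each summand of $\hM_\ell$ differs from the corresponding summand of $\tilde M_\ell$ by $\mathcal{O}(n^{-1/2})$, uniformly. Averaging these uniform errors over the (normalized) $U$-statistic sums yields $\hM_\ell=\tilde M_\ell+\mathcal{O}(n^{-1/2})$ for $\ell=1,2,3$, so it remains to show $\tilde M_\ell\convp M_\ell$.

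For the second stage I would invoke standard $U$-statistic theory. Since $\tK(x,x')=\langle\phi(x)-\mu_x,\phi(x')-\mu_{x'}\rangle_{\calF}$ is symmetric and positive semidefinite, its Mercer eigenvalues $\lambda_r$ in (\ref{mercer.sec2}) are nonnegative, and a short computation using the orthonormality of $\phi_1,\phi_2,\dots$ gives $M_1=\sum_r\lambda_r$, $M_2=\sum_r\lambda_r^2$, $M_3=\sum_r\lambda_r^3$, all finite by (\ref{cond.sec2}); moreover $M_2>0$ and $M_3>0$ unless $\tK\equiv0$ a.s., a degenerate case that is excluded here (and is in any case required for (\ref{betadf.sec2}) to be meaningful). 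Now $\tilde M_1$ is an average of i.i.d.\ bounded variables with mean $M_1$, so $\tilde M_1\convp M_1$ by Chebyshev's inequality; $\tilde M_2$ and $\tilde M_3$ are $U$-statistics of orders $2$ and $3$ with bounded symmetric kernels $\tK^2(x,x')$ and $\tK(x_1,x_2)\tK(x_2,x_3)\tK(x_3,x_1)$ and with means $M_2$ and $M_3$ by (\ref{MNexp.sec2}), and the variance bound for $U$-statistics gives $\Var(\tilde M_2)=\mathcal{O}(n^{-1})$ and $\Var(\tilde M_3)=\mathcal{O}(n^{-1})$, so again Chebyshev yields $\tilde M_2\convp M_2$ and $\tilde M_3\convp M_3$. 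Combining with the first stage gives $\hM_\ell\convp M_\ell$, and the identical argument with $(L,y,\rho_r)$ in place of $(K,x,\lambda_r)$ gives $\hN_\ell\convp N_\ell$.

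Finally, because $M_2N_2>0$ and $M_3N_3>0$, the maps $(M_1,M_2,M_3,N_1,N_2,N_3)\mapsto(\beta_0,\beta_1,d)$ defined in (\ref{betadf.sec2}) are continuous at the true parameter vector, so the continuous mapping theorem (together with the joint convergence of the six estimators, which follows coordinatewise for convergence in probability) gives $\hbeta_0\convp\beta_0$, $\hbeta_1\convp\beta_1$, and $\hd\convp d$. The one genuinely delicate point is the first stage: because $\tK^*$ depends on the whole sample rather than being a fixed kernel evaluated at i.i.d.\ points, one cannot apply $U$-statistic limit theorems to $\hM_\ell$ directly, and one must verify that the uniform $\mathcal{O}(n^{-1/2})$ error furnished by Lemma~\ref{lem1.sec2} is not amplified when propagated through the products of two or three factors and the subsequent averaging --- which it is not, precisely because every factor is uniformly bounded by $4B_K$ (respectively $4B_L$).
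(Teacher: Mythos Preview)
Your proposal is correct and follows essentially the same two-stage route as the paper: use Lemma~\ref{lem1.sec2} together with the uniform bound $|\tK|,|\tK^*|\le 4B_K$ to replace $\tK^*$ by $\tK$ at cost $\mathcal{O}(n^{-1/2})$, then invoke $U$-statistic consistency (the paper cites Lemma~A of \cite[p.~185]{serfling} where you appeal directly to Chebyshev). Your explicit verification that $M_2,M_3>0$ via the Mercer representation $M_\ell=\sum_r\lambda_r^\ell$, and your explicit appeal to the continuous mapping theorem for $(\hbeta_0,\hbeta_1,\hd)$, fill in details that the paper leaves implicit.
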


\begin{rem}\label{GapproxT.rem} Since the Gamma approximation based test matches the mean and variance of $T_{n,G}$ while the proposed new test matches the mean, variance, and the third central moment of $T_n$, it is expected that in terms of size control,  the proposed new test should  outperform the Gamma approximation based test substantially. This is partially confirmed by the simulation results presented in Tables~\ref{size1.tab} and ~\ref{size2.tab} of  Section~\ref{simu1.sec}. Further, in terms of computational cost,  the proposed new test, with a cost of $\mathcal{O}(n^3)$,  is much less time-consuming than the permutation test when the number of permutations is larger than the sample size  and is slightly more time-consuming than the Gamma approximation based test. This is partially  confirmed by Table~\ref{time1.tab} of  Section~\ref{simu1.sec}.
\end{rem}

\subsection{Asymptotic power}

In this subsection, we investigate the asymptotic power of the proposed test under the following local alternative hypothesis:
\be\label{localt.sec2}
H_{1n}: \HSIC=\|\calC_{uv}\|_{\calF\otimes \calG}=n^{-(1/2-\Delta)}h,
\ee
where $0<\Delta<1/2$ and $h$ is a positive constant.  The above local alternative hypothesis   will tend to the null hypothesis as the sample size $n$  tends to infinity. Therefore, it is  often  challenging to detect it. A test is usually  called  to be root-$n$ consistent if it can detect the local alternative hypothesis (\ref{localt.sec2})  with probability tending to $1$ as $n$ tends to infinity. A root-$n$ consistent test is often preferred.

\begin{thm}\label{power.thm} Under the condition (\ref{cond.sec2}) and the local alternative (\ref{localt.sec2}), as $n\to\infty$, we have
\[
\sqrt{n}\left(T_n/n-\HSIC\right)\convL N(0,4\sigma^2),
\]
where with $(x',y')$ being an independent copy of $(x,y)$, $\sigma^2=\Var\left\{\E[\tK(x,x')\tL(y,y')|(x',y')]\right\}$. It follows that for any significance level $\alpha$, the asymptotic power of the proposed test $T_n$ is given by
\[
Pr\left[T_n\ge \hbeta_0+\hbeta_1\chi_{\hd}^2(\alpha)\right]=\Phi\left[n^{\Delta}h/(2\sigma) \right][1+o(1)],
\]
which tends to $1$ as $n\to\infty$ where $\Phi(\cdot)$ denotes the cumulative distribution function of $N(0,1)$.
\end{thm}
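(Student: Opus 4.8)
The plan is to reduce $T_n/n$ to a non-degenerate U-statistic, run it through the Hoeffding projection and a classical CLT for i.i.d.\ sums, and then invert the resulting normal approximation to read off the power. The first step upgrades the reduction behind (\ref{tTn2.sec2}): using Lemma~\ref{lem1.sec2} together with a careful accounting of the leave-one-out corrections, one shows $T_n/n = n^{-2}\tr(\tbK\tbL) + o_p(n^{-1/2})$, and that $n^{-2}\tr(\tbK\tbL)$ coincides, up to $O_p(n^{-1})$, with the degree-two U-statistic $U_n=\binom{n}{2}^{-1}\sum_{i<j}\tK(x_i,x_j)\tL(y_i,y_j)$ --- the diagonal contribution $\sum_i\tK(x_i,x_i)\tL(y_i,y_i)$ is $O_p(n^{-1})$ after the $n^{-2}$ scaling, and the mismatch between $1/n^2$ and $1/[n(n-1)]$ is of the same harmless order. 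The kernel $h\big((x,y),(x',y')\big)=\tK(x,x')\tL(y,y')$ is bounded by (\ref{cond.sec2}) and has mean $\E[h]=\|\calC_{xy}\|^2_{\calF\otimes\calG}=\HSIC$.

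Second, since $h$ is bounded the Hoeffding decomposition of $U_n$ gives
\[
\sqrt n\,(U_n-\HSIC)=\frac{2}{\sqrt n}\sum_{i=1}^{n}\big[g(x_i,y_i)-\HSIC\big]+o_p(1),\qquad g(x',y')=\E\big[\tK(x,x')\tL(y,y')\mid (x',y')\big],
\]
the remainder being the degenerate second-order part, which is $O_p(n^{-1})$. The summands $g(x_i,y_i)-\HSIC$ are i.i.d., mean zero and bounded, with variance $\sigma^2=\Var\{g(x,y)\}$, so (Lindeberg being automatic) the sum converges to $N(0,4\sigma^2)$; combining with the first step and Slutsky gives $\sqrt n\,(T_n/n-\HSIC)\convL N(0,4\sigma^2)$, hence the same limit for $\sqrt n\,(\tT_n/n-\HSIC)$. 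Under the local alternative (\ref{localt.sec2}) this is a triangular-array statement, so one records that $\sigma^2$ is the influence-function variance under the prevailing joint law, which varies continuously in $P_{xy}$.

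For the power I would first note that $\hM_1,\dots,\hN_3$ depend on the data only through the marginals, so Theorem~\ref{hcoefs.thm} gives $\hbeta_0\convp\beta_0$, $\hbeta_1\convp\beta_1$, $\hd\convp d$ even under (\ref{localt.sec2}); hence the critical value $c_n:=\hbeta_0+\hbeta_1\chi_{\hd}^2(\alpha)=O_p(1)$. Then
\[
Pr\big[T_n\ge c_n\big]=Pr\!\left[\frac{\sqrt n\,(T_n/n-\HSIC)}{2\sigma}\ \ge\ \frac{c_n/\sqrt n-n^{\Delta}h}{2\sigma}\right],
\]
where the left side is asymptotically $N(0,1)$ while the right-hand threshold equals $\big(o_p(1)-n^{\Delta}h\big)/(2\sigma)\to-\infty$, so the probability tends to $1$; passing the CLT through a Berry--Esseen bound for the standardized sum sharpens this to $\Phi\!\big((n^{\Delta}h-c_n/\sqrt n)/(2\sigma)\big)(1+o(1))=\Phi\big(n^{\Delta}h/(2\sigma)\big)(1+o(1))$.

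The hard part will be the first step, since Lemma~\ref{lem1.sec2} only bounds $\tK^*(x_i,x_j)-\tK(x_i,x_j)$ uniformly by $\mathcal{O}(n^{-1/2})$ --- exactly the order we must resolve in $\sqrt n\,(T_n/n-\HSIC)$ --- so a blunt substitution is insufficient. Instead I would expand $\tK^*(x_i,x_j)-\tK(x_i,x_j)=a_i+a_j+b+O_p(n^{-1})$ with $a_i=-\big(\tfrac1{n-1}\sum_{v\neq i}K(x_i,x_v)-\E_{z'}[K(x_i,z')]\big)$ centered and $b=O_p(n^{-1/2})$ a constant, and then exploit that $n^{-1}\sum_j\tL(y_i,y_j)=\E_y[\tL(y_i,y)\mid y_i]+o_p(1)=o_p(1)$ because the marginally centered kernel integrates to zero; this forces the aggregate error $n^{-2}\sum_{i,j}\big(\tK^*(x_i,x_j)\tL^*(y_i,y_j)-\tK(x_i,x_j)\tL(y_i,y_j)\big)$ down to $o_p(n^{-1/2})$. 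The same cancellation is what makes $T_n/n$ effectively a degree-two rather than a degree-four U-statistic, in line with Remark~\ref{Gstat1.rem}.
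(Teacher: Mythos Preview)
Your proof follows the same route as the paper's: reduce $T_n/n$ to the degree-two U-statistic $U_n=\binom{n}{2}^{-1}\sum_{i<j}\tK(x_i,x_j)\tL(y_i,y_j)$, invoke the non-degenerate U-statistic CLT (the paper cites Serfling's Theorem~A; your Hoeffding projection plus i.i.d.\ CLT is the same engine), and then standardize to read off the power using $\hbeta_0,\hbeta_1,\hd\convp\beta_0,\beta_1,d$ from Theorem~\ref{hcoefs.thm}. Where you differ is in the care taken over the reduction: the paper simply asserts $T_n=\tT_n+\mathcal{O}(n^{-1/2})$ from Lemma~\ref{lem1.sec2} and hence $T_n/n=\tT_n/n+\mathcal{O}(n^{-3/2})$, while you rightly flag that a uniform $\mathcal{O}(n^{-1/2})$ entrywise bound, summed blindly over $n^2$ pairs, does not by itself deliver this, and you sketch the missing cancellation via $\E_{y'}[\tL(y,y')]=0$. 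Your observation that $\hM_\ell,\hN_\ell$ depend only on the marginal samples (so Theorem~\ref{hcoefs.thm} applies unchanged under the local alternative), your Berry--Esseen refinement, and your remark about the triangular-array nature of $\sigma^2$ are add-ons not present in the paper's argument, which treats $\sigma^2$ as fixed and strictly positive without comment.
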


 The proof  of Theorem~\ref{power.thm} is  given in the Appendix.  Theorem~\ref{power.thm} shows that the proposed new test $T_n$ is root-$n$ consistent.

\begin{rem}\label{Gpower.rem}
The first result of Theorem~\ref{power.thm} is parallel  to Theorem 1 of \cite{NIPS2007_d5cfead9} where the authors treated $T_{n,G}/n$ as a V-statistic of an order $4$ kernel  while actually we have $T_{n,G}/n=\tT_n/n+O_p(n^{-3/2})$ with $\tT_n/n$ being a V-statistic of an order $2$ kernel  only [see (\ref{tTn2.sec2}) for details]. The result of Theorem 1 of \cite{NIPS2007_d5cfead9} may be  problematic.
\end{rem}

\section{Simulation studies}\label{simu_st}

In this section, we conduct three simulation studies, namely Simulations 1, 2, and 3,  to compare  the proposed new test, denoted as NEW, against several existing competitors for the two-sample independence testing problem for multivariate, high-dimensional, and functional data, respectively. We compute the empirical size or power of a test as the proportion of the number of rejections out of $10,000$ simulation runs. Throughout this section, we set the nominal size $\alpha$ as $5\%$.

In the  three simulation studies described below, for simplicity,  we choose the kernel $K(\cdot,\cdot)$ to be the following  Gaussian radial basis function (RBF) kernel:
%\be\label{gauss_kernel}
$$K(x,x')=\exp\left(-\frac{\|x-x'\|^2}{2\sigma^2}\right),$$
%\ee
where $\sigma^2$ is the so-called  kernel width. For multivariate and high-dimensional data as in Simulations 1 and 2, $\|x\|$ denotes the usual $L^2$-norm of a vector $x$ and for functional data as in Simulation 3, it denotes  the usual $L^2$-norm of a function $x(t),t\in\calT$ given by $\|x\|=\left[\int_{\calT}x^2(t)dt\right]^{1/2}$ and it is computed via approximating  the integrals using the trapezoidal rule. It is easy  to see that  the above Gaussian RBF kernel is bounded above by $1$ so that the condition (\ref{cond.sec2}) is always satisfied.  The kernel width $\sigma^2$ is selected by employing the data-adaptive Gaussian kernel width selection method proposed in \citet[sec. 2.6]{ZhangGuoZhou2022JOE}. For the kernel $L(\cdot,\cdot)$, it is  done similarly.

\subsection{Simulation 1}\label{simu1.sec}

In this simulation study, we demonstrate  the performance of the NEW test for multivariate data against  the Gamma approximation based test and the permutation test proposed and studied
 in \cite{NIPS2007_d5cfead9}, denoted as HSICg and HSICp, respectively. The HSICg test is implemented in the \textsf{R} package \textsf{dHSIC} (\citealt{pfister2017dhsic}) and the number of permutations used in the HSICp test is set as 200.

\begin{table}[!h]
	\caption{Empirical sizes and powers (in $\%$) of  the HSICp, HSICg, and NEW tests in  Simulation 1.}\label{size1.tab}	
	\begin{center}
		%	\setlength{\tabcolsep}{3pt}	
		%	\scalebox{0.7}{
			\begin{tabular}{ccccccccccc}
				\hline
				\multicolumn{2}{c}{}      & \multicolumn{3}{c}{$\theta=0$} & \multicolumn{3}{c}{$\theta=\pi/8$} & \multicolumn{3}{c}{$\theta=\pi/4$} \\
				\hline
				$n$     & $p$     & HSICp    & HSICg  & NEW    & HSICp    & HSICg  & NEW    & HSICp    & HSICg  & NEW\\
				\hline
		\multirow{4}[0]{*}{30} & 2     & 5.16  & 5.31  & 5.57  & 8.42  & 8.74  & 9.20  & 13.50 & 13.69 & 14.43 \\
		& 4     & 5.01  & 3.49  & 5.49  & 6.14  & 4.12  & 6.85  & 7.44  & 5.19  & 8.24 \\
		& 10    & 5.13  & 0.10  & 5.99  & 5.76  & 0.06  & 6.54  & 5.61  & 0.14  & 6.50 \\
		& 20    & 5.08  & 0.00  & 6.11  & 4.83  & 0.00  & 5.98  & 5.43  & 0.00  & 6.48 \\
		\hline
		\multirow{4}[0]{*}{50} & 2     & 5.08  & 5.39  & 5.32  & 11.84 & 12.34 & 12.32 & 21.66 & 22.58 & 22.52 \\
		& 4     & 5.22  & 4.34  & 5.42  & 6.67  & 5.64  & 7.20  & 9.59  & 8.05  & 10.08 \\
		& 10    & 5.33  & 0.40  & 5.78  & 5.52  & 0.35  & 5.95  & 5.47  & 0.48  & 5.75 \\
		& 20    & 4.86  & 0.00  & 5.41  & 4.77  & 0.00  & 5.60  & 5.44  & 0.00  & 6.12 \\
		\hline
		\multirow{4}[0]{*}{100} & 2     & 5.03  & 5.40  & 4.97  & 20.46 & 21.58 & 20.74 & 38.97 & 40.13 & 39.38 \\
		& 4     & 4.86  & 4.59  & 4.91  & 8.72  & 8.59  & 9.24  & 15.80 & 15.32 & 16.27 \\
		& 10    & 5.33  & 1.63  & 5.51  & 5.92  & 1.98  & 6.29  & 6.90  & 2.43  & 7.15 \\
		& 20    & 4.98  & 0.00  & 5.28  & 5.25  & 0.00  & 5.69  & 5.44  & 0.00  & 5.72 \\
		\hline
		\multirow{4}[0]{*}{200} & 2     & 5.12  & 5.54  & 5.18  & 38.11 & 39.78 & 38.76 & 58.50 & 59.83 & 59.07 \\
		& 4     & 5.03  & 5.27  & 5.22  & 15.44 & 15.99 & 15.86 & 30.76 & 31.25 & 31.17 \\
		& 10    & 5.19  & 3.10  & 5.16  & 6.53  & 3.93  & 6.73  & 9.58  & 6.27  & 9.68 \\
		& 20    & 4.66  & 0.03  & 4.88  & 5.45  & 0.00  & 5.49  & 6.35  & 0.12  & 6.50 \\
		\hline
		\multirow{4}[0]{*}{500} & 2     & 5.19  & 5.66  & 5.23  & 64.05 & 65.17 & 64.33 & 80.60 & 81.26 & 80.77 \\
		& 4     & 5.03  & 5.55  & 5.31  & 36.04 & 37.00 & 36.36 & 54.99 & 56.01 & 55.45 \\
		& 10    & 5.25  & 4.39  & 5.05  & 10.62 & 9.30  & 10.73 & 20.11 & 18.46 & 20.47 \\
		& 20    & 4.64  & 0.81  & 4.67  & 6.15  & 1.31  & 6.28  & 8.78  & 1.93  & 8.77 \\
		\hline
\end{tabular}%}
\end{center}
\end{table}

 We make  use of the multivariate benchmark data scheme used in  \cite{NIPS2007_d5cfead9}. We conduct  the independence test  for  $p$-dimensional random variables for  $p=2,4,10$, and $20$. The data are generated  as follows. First, using \textsf{rjordan} in the \textsf{R} package \textsf{ProDenICA} (\citealt{hastie2022package}),  we generate $n$ observations of two univariate random variables randomly and with replacement, each drawn at random from the Independent Component Analysis (ICA)  benchmark densities in Table 3 of \cite{Gretton05aJMLR}, including  super-Gaussian, sub-Gaussian, multimodal, and unimodal distributions.  Second, we mix these random variables using a rotation matrix parameterized by an angle $\theta$, varying from $0$ to $\pi/4$ (a zero angle means the data are independent, while dependence becomes easier to detect as the angle increases to $\pi/4$). That is, we set $\theta=0, \pi/8$, and $\pi/4$. Third, we append $p-1$ dimensional Gaussian noise of $0$ mean and $1$ standard deviation to each of the mixtures. Finally, we multiply each resulting vector by an independent random $p$-dimensional orthogonal matrix, to obtain vectors which are  dependent across all observed dimensions. The sample size we consider includes  $n=30,50,100,200$ and $500$, respectively.

The empirical sizes and powers of the HSICp, HSICg, and NEW  tests in Simulation 1 are displayed in Table~\ref{size1.tab}. We can draw several interesting  conclusions. When $\theta=0$, the null hypothesis holds so that we can compare the performances of the three tests in terms of size control.  It is seen that  in terms of size control, both the  HSICp and NEW tests  have very good level accuracy and their performances are generally comparable since  the empirical sizes of the  HSICp and NEW tests are generally around 5\% and below 6\% under most of the settings although the empirical sizes of the NEW test are slightly larger than those of the HSICp test. Admittedly, the HSICp test slightly outperforms the NEW test since the empirical sizes of the HSICp test range from $4.64\%$ to $5.33\%$ while the empirical sizes of the NEW test range from $4.67\%$ to $6.11\%$. However, the HSICg test performs much worse than  the HSICp and NEW tests. When $p=2$, the HSICg test performs quite well with its empirical sizes  generally around 5\%.  However,  with increasing the dimension $p$, the HSICg test becomes more and more conservative with its empirical sizes becoming as small as $0.00\%$ especially when $p\ge 10$ and the sample size $n$ is small. This means that the HSICg test does not work  for moderate or  high dimensional data while the HSICp and NEW tests still  work well.    On the other hand, when $\theta>0$, the alternative hypotheses hold so that we can compare the performances of the three tests in terms of power. As expected, the empirical powers of the HSICp and NEW tests are generally comparable although the empirical powers of the HSICp test are slightly smaller than those of the NEW test but when the dimension $p\ge 10$,  the empirical powers of the HSICg test are generally  smaller than those of the HSICp and NEW tests, showing the impact of the level accuracy of the three tests. Notice that  as the value of $\theta$ increases or as the sample size $n$ increases or both,   the empirical powers  of the three tests are generally getting larger.  Notice also that  as the dimension $p$ increases,  the empirical powers of each of the three tests are getting smaller.  This is  not surprising, however,  because  when $\theta\neq 0$,  only the first elements in the two variables are correlated so that as the dimension  $p$ increases, the dependence between the variables  becomes harder and harder  to detect.

In the above, we compare the HSICp, HSICg, and NEW  tests in terms of size control and power. We now compare their computational costs. To this end,  the total execution time (in minutes) of  the three tests  for the $10,000$ simulation runs when $\theta=0$,  $p=4,10$, and $20$ and $n=30,50,100,200$, and $500$  are presented in Table~\ref{time1.tab}.  It is seen that   the HSICp test is  $10\sim 100$ times  more time-consuming than the NEW  test although here the number of permutations is only $200$ while the NEW test is
only about $1\sim 10$ times more time-consuming than the HSICg test for the sample size $n=30,50,100,200$ and $500$.
\begin{table}[!h]
	\caption{Computational costs (in minutes) of the HSICp, HSICg, and NEW  tests in  Simulation 1 when $\theta=0$ and $p=4,10,$ and $20$ for $N=10,000$ simulation runs.}\label{time1.tab}	
	\begin{center}
		\begin{tabular}{cccccccccc}
			\hline
			& \multicolumn{3}{c}{$p=4$} & \multicolumn{3}{c}{$p=10$} & \multicolumn{3}{c}{$p=20$} \\
			\hline
			$n$     & HSICp    & HSICg  & NEW      & HSICp    & HSICg  & NEW      & HSICp    & HSICg  & NEW\\	
			\hline
			30    & 12.75 & 0.10  & 0.09  & 14.61 & 0.11  & 0.10  & 29.41 & 0.17  & 0.18 \\
			50    & 40.97 & 0.18  & 0.29  & 46.12 & 0.20  & 0.32  & 86.92 & 0.36  & 0.52 \\
			100   & 202.43 & 0.55  & 1.51  & 222.77 & 0.63  & 1.61  & 383.34 & 1.19  & 2.41 \\
			200   & 1145.73 & 1.96  & 9.32  & 1227.69 & 2.29  & 9.72  & 1865.75 & 4.55  & 12.89 \\
			500   & 13842.41 & 12.26 & 123.01 & 14395.42 & 14.45 & 125.72 & 18071.83 & 28.38 & 143.37 \\
			\hline
		\end{tabular}%
	\end{center}
\end{table}

\subsection{Simulation 2}\label{simu2.sec}

\begin{table}[!h]
	\caption{Empirical sizes (in $\%$) of the HSICp, HSICg, and NEW tests in  Simulation 2.}\label{size2.tab}	
	\begin{center}
		\scalebox{0.85}{
			\begin{tabular}{cccccccccccc}
				\hline
				\multicolumn{3}{c}{}      & \multicolumn{3}{c}{$\rho=0.1$} & \multicolumn{3}{c}{$\rho=0.5$} & \multicolumn{3}{c}{$\rho=0.9$} \\
				\hline
				Model		&$n$     & $p$     & HSICp    & HSICg  & NEW    & HSICp    & HSICg  & NEW    & HSICp    & HSICg  & NEW\\
				\hline
				\multirow{9}[0]{*}{1} & \multirow{3}[0]{*}{30} & 50    & 4.80  & 0.00  & 6.10  & 5.35  & 0.00  & 6.57  & 5.11  & 1.08  & 5.56 \\
				&       & 100   & 4.47  & 0.00  & 5.85  & 4.77  & 0.00  & 6.02  & 4.93  & 0.00  & 5.53 \\
				&       & 200   & 4.98  & 0.00  & 6.14  & 5.28  & 0.00  & 6.35  & 4.94  & 0.00  & 5.65 \\
				\cline{2-12}
				& \multirow{3}[0]{*}{50} & 50    & 4.88  & 0.00  & 5.64  & 5.24  & 0.00  & 5.76  & 5.14  & 2.19  & 5.33 \\
				&       & 100   & 4.74  & 0.00  & 5.56  & 4.59  & 0.00  & 5.38  & 4.91  & 0.04  & 5.37 \\
				&       & 200   & 4.83  & 0.00  & 5.82  & 5.06  & 0.00  & 5.89  & 5.19  & 0.00  & 5.56 \\
				\cline{2-12}
				& \multirow{3}[0]{*}{100} & 50    & 5.02  & 0.00  & 5.42  & 4.73  & 0.00  & 5.11  & 5.03  & 3.64  & 5.15 \\
				&       & 100   & 4.97  & 0.00  & 5.45  & 4.93  & 0.00  & 5.49  & 4.62  & 0.58  & 4.80 \\
				&       & 200   & 5.33  & 0.00  & 5.73  & 5.07  & 0.00  & 5.65  & 4.95  & 0.00  & 5.24 \\
				\hline
				\multirow{9}[0]{*}{2} & \multirow{3}[0]{*}{30} & 50    & 5.11  & 0.00  & 7.12  & 5.34  & 0.00  & 6.78  & 5.19  & 1.44  & 5.89 \\
				&       & 100   & 4.78  & 0.00  & 6.49  & 5.13  & 0.00  & 6.86  & 5.11  & 0.02  & 5.90 \\
				&       & 200   & 4.70  & 0.00  & 6.48  & 5.00  & 0.00  & 6.67  & 4.83  & 0.00  & 5.87 \\
				\cline{2-12}
				& \multirow{3}[0]{*}{50} & 50    & 4.62  & 0.00  & 5.80  & 4.77  & 0.00  & 5.79  & 4.78  & 2.21  & 5.30 \\
				&       & 100   & 4.68  & 0.00  & 5.82  & 5.19  & 0.00  & 6.09  & 4.70  & 0.09  & 5.27 \\
				&       & 200   & 5.03  & 0.00  & 6.02  & 4.81  & 0.00  & 5.75  & 5.20  & 0.00  & 5.81 \\
				\cline{2-12}
				& \multirow{3}[0]{*}{100} & 50    & 4.83  & 0.00  & 5.20  & 4.81  & 0.00  & 5.31  & 5.25  & 3.65  & 5.35 \\
				&       & 100   & 5.03  & 0.00  & 5.41  & 5.27  & 0.00  & 5.83  & 4.96  & 0.88  & 5.13 \\
				&       & 200   & 5.19  & 0.00  & 5.71  & 4.71  & 0.00  & 5.30  & 5.01  & 0.00  & 5.37 \\
				\hline
				\multirow{9}[0]{*}{3} & \multirow{3}[0]{*}{30} & 50    & 5.04  & 0.00  & 6.83  & 4.42  & 0.00  & 5.84  & 5.10  & 1.55  & 5.93 \\
				&       & 100   & 5.17  & 0.00  & 6.21  & 5.08  & 0.00  & 6.38  & 5.31  & 0.04  & 6.04 \\
				&       & 200   & 5.31  & 0.00  & 6.21  & 5.11  & 0.00  & 6.37  & 4.76  & 0.00  & 5.60 \\
				\cline{2-12}
				& \multirow{3}[0]{*}{50} & 50    & 4.69  & 0.00  & 5.48  & 5.25  & 0.00  & 6.18  & 4.83  & 2.40  & 5.20 \\
				&       & 100   & 5.01  & 0.00  & 5.47  & 4.94  & 0.00  & 5.72  & 4.94  & 0.17  & 5.42 \\
				&       & 200   & 5.10  & 0.00  & 5.23  & 4.95  & 0.00  & 5.42  & 4.80  & 0.00  & 5.36 \\
				\cline{2-12}
				& \multirow{3}[0]{*}{100} & 50    & 5.22  & 0.00  & 5.39  & 4.72  & 0.00  & 5.07  & 4.75  & 3.90  & 4.86 \\
				&       & 100   & 5.49  & 0.00  & 5.71  & 4.82  & 0.00  & 5.05  & 5.01  & 1.06  & 5.17 \\
				&       & 200   & 4.62  & 0.00  & 4.45  & 5.09  & 0.00  & 5.34  & 4.78  & 0.00  & 5.11 \\
				\hline
				\multicolumn{3}{c}{ARE} & 4.04  & 100.00 & 16.92 & 4.07  & 100.00 & 17.01 & 3.09  & 81.53 & 9.22 \\
				\hline
		\end{tabular}}
	\end{center}
\end{table}

In this simulation study, we  compare  the HSICp, HSICg, and NEW  tests  for high-dimensional data in terms of size control and power.  In each simulation run, the sample  $(\bx_i, \by_i), i=1,\ldots,n$ are generated as follows.  For size control comparison, we set $\bx_i=\bGamma \bz_{1i},i=1,\ldots,n$ and $\by_i=\bGamma \bz_{2i},i=1,\ldots,n$ where $\bGamma\bGamma^\T =\bSigma/\tr(\bSigma^2)$, and $\bSigma=(\sigma_{st})_{s,t=1}^p,\sigma_{st}=\rho^{|s-t|}$ with $\rho$ controlling the data correlation. The i.i.d. random variables $\bz_{k i}=(z_{k i 1},\ldots, z_{k i p})^\T, i=1,\ldots,n;\;k=1,2$ are generated from the following three models:
\begin{description}
	\item[Model 1.] $z_{k ij},j=1,\ldots,p\iidsim N(0,1)$.
	\item[Model 2.] $z_{k ij}=w_{k ij}/\sqrt{2}, j=1,\ldots,p$, where $w_{k ij}\iidsim t_4$, the $t$-distribution with 4 degrees of freedom.
	\item[Model 3.] $z_{k ij}=(w_{k ij}-1)/\sqrt{2},j=1,\ldots,p$, where $w_{k ij}\iidsim \chi_1^2$.
\end{description}
For power comparison, we set  $\bx_i=\bGamma \bz_{1i},i=1,\ldots,n$  where $\bz_{1i}=(z_{1i1},\ldots, z_{1ip})^\T, i=1,\ldots,n$ with $z_{1ij}, j=1,\ldots, p; i=1,\ldots,n$ generated from  Model 1 and set $\by_i=(y_{i1},\ldots,y_{ip})^\T$ with $y_{ij}=\delta(x_{ij}+x_{ij}^2)+z_{2ij},j=1,\ldots,p;i=1,\ldots,n$,  with $x_{ij}$ being the $j$th entry of $\bx_i$ for $j=1,\ldots,p; i=1,\ldots, n$ and  $z_{2ij}, j=1,\ldots, p; i=1,\ldots,n$ generated from Model 2. We set  $\delta\in\{0.6,0.8,1.0,1.2\}$. Note that as $\delta$ increases, the dependence between $\bx_i$ and $\by_i$ are getting stronger so that  the power of a test should increase. We specify $\rho=0.1,0.5$, and $0.9$ to represent less, moderately, and highly correlated data cases. To generate the high-dimensional data under  ``large $p$, small $n$" settings, we choose $n=30,50,100$ and   $p=50,100,200$, respectively.  Here and throughout, to measure the overall performance of a test in maintaining the nominal size $\alpha=5\%$,  we employ the  average relative error (ARE) criterion  of \cite{zhang2011two}. The ARE value of a test is calculated as ARE $=100M^{-1}\sum_{j=1}^M|\halpha_j-\alpha|/\alpha$, where $\halpha_j,j=1,\ldots,M$, denote the empirical sizes under $M$ simulation settings. A smaller ARE value of a test indicates a better performance of that test in terms of size control.

\begin{table}[!h]
	\caption{Empirical powers (in $\%$) of  the HSICp, HSICg, and NEW tests in  Simulation 2.}\label{power2.tab}	
	\begin{center}
			\setlength{\tabcolsep}{3pt}	
			\scalebox{0.85}{
			\begin{tabular*}{\textwidth}{@{\extracolsep{\fill}}cccccccccccc}
				\hline
				\multicolumn{3}{c}{}      & \multicolumn{3}{c}{$\rho=0.1$} & \multicolumn{3}{c}{$\rho=0.5$} & \multicolumn{3}{c}{$\rho=0.9$} \\
				\hline
				$n$     & $p$   &$\delta$  & HSICp    & HSICg  & NEW    & HSICp    & HSICg  & NEW    & HSICp    & HSICg  & NEW\\
				\hline
		\multirow{12}[0]{*}{30} & \multirow{4}[0]{*}{50} & 0.6   & 15.29 & 0.00  & 18.22 & 15.54 & 0.00  & 18.21 & 16.11 & 0.00  & 18.51 \\
		&       & 0.8   & 27.17 & 0.00  & 31.47 & 29.05 & 0.00  & 33.05 & 29.42 & 0.01  & 32.63 \\
		&       & 1.0   & 45.64 & 0.00  & 50.97 & 47.45 & 0.00  & 52.65 & 49.11 & 0.01  & 52.60 \\
		&       & 1.2   & 66.95 & 0.00  & 71.14 & 68.81 & 0.00  & 73.15 & 70.86 & 0.09  & 73.93 \\
		\cline{2-12}
		& \multirow{4}[0]{*}{100} & 0.6   & 10.42 & 0.00  & 13.02 & 10.27 & 0.00  & 12.96 & 11.21 & 0.00  & 13.49 \\
		&       & 0.8   & 16.17 & 0.00  & 19.39 & 16.42 & 0.00  & 20.18 & 17.48 & 0.00  & 20.50 \\
		&       & 1.0   & 26.27 & 0.00  & 30.63 & 26.53 & 0.00  & 31.17 & 28.84 & 0.00  & 32.58 \\
		&       & 1.2   & 39.93 & 0.00  & 45.16 & 42.59 & 0.00  & 47.70 & 45.22 & 0.00  & 49.55 \\
			\cline{2-12}
		& \multirow{4}[0]{*}{200} & 0.6   & 8.06  & 0.00  & 9.89  & 8.43  & 0.00  & 10.31 & 7.76  & 0.00  & 9.74 \\
		&       & 0.8   & 10.58 & 0.00  & 13.03 & 11.03 & 0.00  & 13.79 & 10.97 & 0.00  & 13.65 \\
		&       & 1.0   & 15.75 & 0.00  & 19.08 & 15.49 & 0.00  & 18.66 & 17.22 & 0.00  & 20.18 \\
		&       & 1.2   & 23.53 & 0.00  & 27.44 & 22.99 & 0.00  & 27.53 & 25.06 & 0.00  & 29.26 \\
		\hline
		\multirow{12}[0]{*}{50} & \multirow{4}[0]{*}{50} & 0.6   & 26.75 & 0.00  & 29.53 & 26.48 & 0.00  & 28.82 & 27.93 & 0.04  & 29.77 \\
		&       & 0.8   & 52.32 & 0.00  & 55.93 & 54.71 & 0.00  & 57.84 & 55.54 & 0.09  & 57.62 \\
		&       & 1.0   & 81.09 & 0.00  & 83.12 & 82.04 & 0.00  & 84.24 & 81.72 & 0.30  & 83.23 \\
		&       & 1.2   & 95.68 & 0.00  & 96.41 & 96.74 & 0.00  & 97.18 & 95.96 & 0.97  & 96.36 \\
			\cline{2-12}
		& \multirow{4}[0]{*}{100} & 0.6   & 15.48 & 0.00  & 17.57 & 15.36 & 0.00  & 17.29 & 16.61 & 0.00  & 18.47 \\
		&       & 0.8   & 28.26 & 0.00  & 31.42 & 29.38 & 0.00  & 32.48 & 32.07 & 0.00  & 34.43 \\
		&       & 1.0   & 49.88 & 0.00  & 53.68 & 51.26 & 0.00  & 55.24 & 54.97 & 0.00  & 57.73 \\
		&       & 1.2   & 72.03 & 0.00  & 74.95 & 76.19 & 0.00  & 78.89 & 78.81 & 0.01  & 80.73 \\
			\cline{2-12}
		& \multirow{4}[0]{*}{200} & 0.6   & 9.84  & 0.00  & 11.49 & 10.38 & 0.00  & 12.05 & 10.23 & 0.00  & 11.74 \\
		&       & 0.8   & 16.20 & 0.00  & 18.30 & 16.66 & 0.00  & 18.98 & 17.56 & 0.00  & 19.90 \\
		&       & 1.0   & 26.55 & 0.00  & 29.66 & 27.42 & 0.00  & 30.54 & 30.16 & 0.00  & 33.10 \\
		&       & 1.2   & 42.78 & 0.00  & 46.14 & 44.97 & 0.00  & 48.83 & 47.36 & 0.00  & 50.70 \\
		\hline
		\multirow{12}[0]{*}{100} & \multirow{4}[0]{*}{50} & 0.6   & 62.82 & 0.00  & 64.73 & 64.45 & 0.00  & 66.01 & 63.28 & 0.25  & 64.19 \\
		&       & 0.8   & 95.04 & 0.00  & 95.65 & 95.64 & 0.00  & 95.95 & 94.30 & 1.05  & 94.55 \\
		&       & 1.0   & 99.93 & 0.00  & 99.95 & 99.99 & 0.07  & 99.99 & 99.78 & 3.19  & 99.81 \\
		&       & 1.2   & 100.00 & 0.00  & 100.00 & 100.00 & 0.15  & 100.00 & 100.00 & 15.96 & 100.00 \\
			\cline{2-12}
		& \multirow{4}[0]{*}{100} & 0.6   & 33.23 & 0.00  & 35.29 & 34.88 & 0.00  & 36.72 & 36.05 & 0.00  & 37.60 \\
		&       & 0.8   & 67.56 & 0.00  & 69.55 & 69.28 & 0.00  & 71.21 & 72.47 & 0.03  & 73.59 \\
		&       & 1.0   & 93.29 & 0.00  & 93.94 & 94.91 & 0.00  & 95.55 & 95.23 & 0.14  & 95.59 \\
		&       & 1.2   & 99.64 & 0.00  & 99.68 & 99.75 & 0.00  & 99.78 & 99.81 & 0.22  & 99.85 \\
			\cline{2-12}
		& \multirow{4}[0]{*}{200} & 0.6   & 18.29 & 0.00  & 19.64 & 18.07 & 0.00  & 19.33 & 20.01 & 0.00  & 21.32 \\
		&       & 0.8   & 35.98 & 0.00  & 38.15 & 37.69 & 0.00  & 40.10 & 39.87 & 0.00  & 41.77 \\
		&       & 1.0   & 63.61 & 0.00  & 65.94 & 65.66 & 0.00  & 67.87 & 69.62 & 0.00  & 71.48 \\
		&       & 1.2   & 87.67 & 0.00  & 89.02 & 89.35 & 0.00  & 90.67 & 92.13 & 0.01  & 92.92 \\
		\hline
	\end{tabular*}}
\end{center}
\end{table}

Table~\ref{size2.tab} presents the empirical sizes of the HSICp, HSICg, and NEW tests  under various settings, with the last row denoting  their ARE values associated with the  three values of $\rho$. In terms of size control, both the HSICp and NEW tests perform well regardless of whether the data are less correlated ($\rho=0.1$), moderately correlated ($\rho=0.5$), or highly correlated $(\rho=0.9)$ since  their empirical sizes are generally around 5\% and their ARE values are generally below $20$. They perform much better than the HSICg test which does not work at all with its empirical sizes being $0$  when $\rho=0.1$ and  0.5, and is very conservative with its empirical sizes being much smaller than $5\%$ when $\rho=0.9$. These conclusions are consistent with those observed from Simulation 1 when the data dimension is small or moderate.

The empirical powers of the HSICp, HSICg, and NEW tests under various configurations  are presented in Table~\ref{power2.tab}. It is seen that  for fixed $n$ and $p$, as the value of $\delta$ increases, the empirical powers of the three tests increase, and under each setting, the empirical powers of the NEW test are generally  larger than those of the HSICp test,  while  the HSICg test is totally no power. The power magnitude order of the three tests is obviously affected by  that of  their empirical sizes, as seen  from Table~\ref{size2.tab}.

\subsection{Simulation 3}\label{simu3.sec}

In this simulation study, we compare  the  NEW test against a few representative existing tests for the two-sample independence testing problem for functional data.  These existing tests include
the  Pearson correlation based test,  dynamical correlation  based test (\citealt{dubin2005dynamical}), %Bias-Correlated Distance Covariance (\citealt{szekely2013distance}),
FPCA-based distance covariance test (\citealt{kosorok2009discussion}), and global temporal correlation  based test (\citealt{zhou2018local}),  denoted as Pearson, dnm, FPCA, and gtemp, respectively. The dnm, FPCA, and gtemp tests  are permutation based  and are implemented in \cite{miao2022wavelet}.

 In each simulation run, the functional observations  $\{x_i(t),y_i(t)\},t\in [0,1],i=1,\ldots,n$ are generated as follows. We take $x_i(t)=\sum_{j=1}^{50}z_{1ij}\sqrt{2}\cos(j\pi t)$ and
 $y_i(t)=\sum_{j=1}^{50}z_{2ij}\sqrt{2}\cos(j\pi t)$, where $z_{1ij},j=1,\ldots,50;i=1,\ldots,n$ are i.i.d. random variables generated in the same way as described in Simulation 2,
 and for some give function $f(\cdot)$, we set $z_{2ij}=f(z_{1ij})$ for $j=1,\ldots,m$ and $z_{2ij}\iidsim N(0,1)$ for $j=m+1,\ldots,50$ where $m$ is an integer used to control  the dependency level of the functional observations such that when $m=0$, the null hypothesis holds and otherwise the dependency level increases with $m$.  In numerical implementation, these functional observations will be evaluated at a grid of equal-spaced time points: $t_r=(r-1)/(k-1),r=1,\ldots,k$.  For power consideration,
 we consider four functions: $f(u)=u^3, f(u)=u^2, f(u)=u\sin(u)$, and $f(u)=u\cos(u)$, with  the first one being monotone and the  last three  are not; see Figure~\ref{funs.fig} for some details. It is often more difficult to
 detect the dependency when $f(u)=u\sin(u)$ and $f(u)=u\cos(u)$ than when $f(u)=u^3$ and $f(u)=u^2$. Thus we set $m=3,5,10$, and $15$ when $f(u)=u^3$ and $f(u)=u^2$, and set  $m=15,20,25$,
 and $30$ when $f(u)=u\sin(u)$ and $f(u)=u\cos(u)$. %Empirical sizes and powers were computed on the basis of 10,000 independent replicates.
  %For performing \cite{manfoumbi2022testing}'s method, we took $\gamma=0.32$ which was proposed in their simulations.
  % To perform our NEW test, we also use the Gaussian RBF kernel as described in Simulations 1 and 2 with  The sample size $n$ is specified as in \mrk{Simulation 2}, and we consider three cases of the number of observation points: $d=201,501$, and $1001$.

\begin{figure}[!h]
	\centering\includegraphics[width=0.8\linewidth]{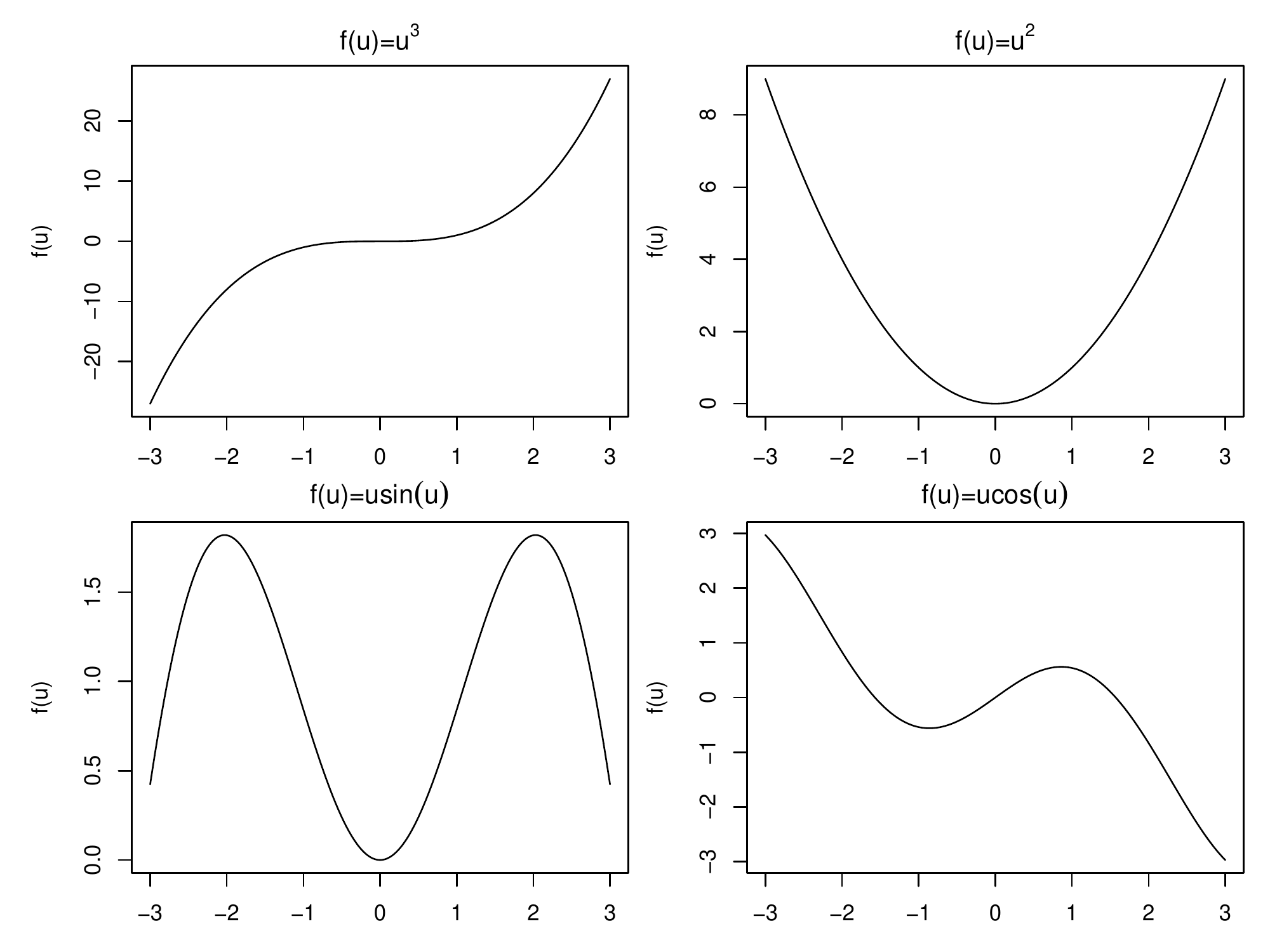}
	\caption{\em Four functions considered in Simulation 3. }\label{funs.fig}
\end{figure}

Table~\ref{size3.tab} displays the empirical sizes of the five considered tests with the last row being the associated ARE values. It is seen that all the five tests have good level accuracy with their empirical sizes generally around 5\%. Admittedly, in terms of size control, the NEW test performs slightly worse than the other four tests but it also computes much faster than them.

\begin{table}[!h]
	\caption{Empirical sizes (in $\%$) of the Pearson, dnm, FPCA, gtemp, and NEW tests in  Simulation 3.}\label{size3.tab}	
	\begin{center}
		\scalebox{0.85}{
			\begin{tabular*}{\textwidth}{@{\extracolsep{\fill}}cccccccc}
				\hline
				Model  & $n$     & $k$     & Pearson & dnm   & FPCA  & gtemp & NEW \\
				\hline
				\multirow{9}[0]{*}{1} & \multirow{3}[0]{*}{30} & 201   & 4.89  & 5.12  & 4.68  & 5.02  & 6.88 \\
				&       & 501   & 5.13  & 5.08  & 4.42  & 5.19  & 6.09 \\
				&       & 1001  & 5.17  & 5.39  & 4.70  & 4.96  & 6.13 \\
				\cline{2-8}
				& \multirow{3}[0]{*}{50} & 201   & 4.83  & 4.94  & 4.36  & 5.21  & 5.69 \\
				&       & 501   & 4.92  & 4.86  & 4.56  & 5.08  & 5.98 \\
				&       & 1001  & 5.02  & 4.90  & 4.55  & 4.89  & 5.92 \\
				\cline{2-8}
				& \multirow{3}[0]{*}{100} & 201   & 4.84  & 5.07  & 4.34  & 5.01  & 5.45 \\
				&       & 501   & 4.84  & 4.91  & 4.59  & 5.14  & 5.72 \\
				&       & 1001  & 4.79  & 5.04  & 4.78  & 5.13  & 5.48 \\
				\hline
				\multirow{9}[0]{*}{2} & \multirow{3}[0]{*}{30} & 201   & 4.55  & 4.73  & 4.79  & 4.77  & 6.29 \\
				&       & 501   & 5.06  & 4.65  & 4.26  & 5.19  & 6.50 \\
				&       & 1001  & 5.11  & 4.89  & 4.77  & 4.85  & 6.21 \\
				\cline{2-8}
				& \multirow{3}[0]{*}{50} & 201   & 4.79  & 4.86  & 4.92  & 4.88  & 6.25 \\
				&       & 501   & 4.93  & 4.73  & 4.35  & 4.83  & 5.62 \\
				&       & 1001  & 5.52  & 5.35  & 4.37  & 4.95  & 6.31 \\
				\cline{2-8}
				& \multirow{3}[0]{*}{100} & 201   & 5.26  & 5.33  & 4.41  & 5.37  & 5.08 \\
				&       & 501   & 5.13  & 5.00  & 4.65  & 5.13  & 5.31 \\
				&       & 1001  & 4.84  & 4.99  & 4.47  & 4.57  & 5.84 \\
				\hline
				\multirow{9}[0]{*}{3} & \multirow{3}[0]{*}{30} & 201   & 5.27  & 5.17  & 4.62  & 4.79  & 6.17 \\
				&       & 501   & 5.36  & 5.29  & 4.09  & 4.93  & 6.23 \\
				&       & 1001  & 5.07  & 4.98  & 4.40  & 5.16  & 6.28 \\
				\cline{2-8}
				& \multirow{3}[0]{*}{50} & 201   & 5.11  & 5.04  & 4.57  & 5.03  & 5.95 \\
				&       & 501   & 5.13  & 5.03  & 4.96  & 5.25  & 6.06 \\
				&       & 1001  & 4.78  & 4.83  & 4.62  & 4.96  & 5.39 \\	
				\cline{2-8}
				& \multirow{3}[0]{*}{100} & 201   & 4.96  & 5.20  & 4.52  & 4.94  & 5.33 \\
				&       & 501   & 5.55  & 5.65  & 4.46  & 5.15  & 5.51 \\
				&       & 1001  & 4.87  & 4.91  & 4.34  & 5.01  & 5.41 \\
				\hline
				\multicolumn{3}{c}{ARE} & 3.75  & 3.39  & 9.22  & 2.78  & 17.84 \\
				\hline
		\end{tabular*}}
	\end{center}
\end{table}

For  space saving,  Table~\ref{power3.tab} only displays the empirical powers (in \%) of the five considered tests under Model 1 with $k=201$ since the conclusions drawn for other values of $k$ are similar. These empirical powers are quite revealing in several ways. First of all, for monotone relationship (i.e., when $f(u)=u^3$),  the NEW test is just slightly less powerful than the other four tests.  The slight  power inferiority of the NEW test is possibly due to the fact that the NEW test is an HSIC-based test  since,  as pointed out by  \cite{shen2019distance},   to detect a monotone relationship,  the  HSIC based tests  may be  slightly inferior to the distance covariance based tests. Second,  for non-monotone relationship (i.e., when $f(u)=u^2, u\sin(u)$ and $u\cos(u)$), the NEW test is generally more powerful than the other four tests. The performances of the other four tests are quite different for different non-monotone relationships. For example, the Pearson test  has almost no powers when $f(u)=u^2$ and $u\sin(u)$, the gtemp test has almost no powers when $f(u)=u\sin(u)$ and $u\cos(u)$, while the dnm and FPCA tests have very low powers when $f(u)=u\sin(u)$.
\begin{table}[!h]
	\caption{Empirical powers (in $\%$) of the Pearson, dnm, FPCA, gtemp, and NEW tests in  Simulation 3 under Model 1 with $k=201$.}\label{power3.tab}	
	\begin{center}
		\scalebox{0.85}{
			\begin{tabular*}{\textwidth}{@{\extracolsep{\fill}}cccccccccccc}
				\hline
				&       & \multicolumn{5}{c}{$f(u)=u^3$}         & \multicolumn{5}{c}{$f(u)=u^2$} \\
				\hline
				$n$     & $m$ & Pearson & dnm   & FPCA  & gtemp & NEW & Pearson & dnm   & FPCA  & gtemp & NEW \\
				\hline
				\multirow{4}[0]{*}{30} & 3     & 91.32 & 95.98 & 83.51 & 75.89 & 54.25 & 4.90  & 8.25  & 8.03  & 5.98  & 9.68 \\
				& 5     & 99.90 & 99.97 & 97.07 & 99.04 & 79.72 & 5.28  & 9.98  & 10.96 & 6.47  & 12.83 \\
				& 10    & 100 & 100 & 99.98 & 100 & 97.62 & 5.01  & 14.19 & 18.25 & 8.45  & 23.33 \\
				& 15    & 100 & 100 & 99.99 & 100 & 99.65 & 4.49  & 17.41 & 26.60 & 9.22  & 37.48 \\
				\hline
				\multirow{4}[0]{*}{50} & 3     & 99.19 & 99.73 & 99.28 & 96.00 & 79.70 & 4.80  & 7.59  & 8.90  & 5.83  & 11.02 \\
				& 5     & 100 & 100 & 100 & 99.99 & 97.47 & 5.18  & 10.17 & 11.89 & 7.03  & 15.70 \\
				& 10    & 100 & 100 & 100 & 100 & 99.99 & 4.89  & 14.61 & 20.87 & 8.34  & 31.80 \\
				& 15    & 100 & 100 & 100 & 100 & 100 & 4.54  & 16.96 & 32.32 & 10.16 & 53.21 \\
				\hline
				\multirow{4}[0]{*}{100} & 3     & 100 & 100 & 100 & 99.99 & 99.16 & 4.83  & 7.77  & 9.54  & 5.63  & 14.37 \\
				& 5     & 100 & 100 & 100 & 100 & 100 & 4.73  & 9.38  & 14.02 & 6.60  & 23.82 \\
				& 10    & 100 & 100 & 100 & 100 & 100 & 4.84  & 13.62 & 27.72 & 8.38  & 57.57 \\
				& 15    & 100 & 100 & 100 & 100 & 100 & 4.73  & 17.70 & 42.68 & 10.40 & 85.17 \\
				\hline
				&       & \multicolumn{5}{c}{$f(u)=u\sin(u)$}      & \multicolumn{5}{c}{$f(u)=u\cos(u)$} \\
				\hline
				$n$& $m$ & Pearson & dnm   & FPCA  & gtemp & NEW & Pearson & dnm   & FPCA  & gtemp & NEW \\
				\hline
				\multirow{4}[0]{*}{30}    & 15    & 4.59  & 6.42  & 5.53  & 4.91  & 9.63  & 5.44  & 8.53  & 6.90  & 5.14  & 9.46 \\
				& 20    & 5.24  & 6.95  & 6.14  & 5.64  & 11.35 & 6.22  & 11.12 & 7.84  & 6.12  & 11.15 \\
				& 25    & 4.82  & 7.77  & 6.95  & 5.93  & 14.07 & 6.67  & 14.16 & 9.85  & 6.53  & 14.42 \\
				& 30    & 4.71  & 9.42  & 8.27  & 6.13  & 17.02 & 8.81  & 19.15 & 11.86 & 7.51  & 18.55 \\
				\hline
				\multirow{4}[0]{*}{50}    & 15    & 5.13  & 6.82  & 5.84  & 5.35  & 10.08 & 5.71  & 9.28  & 6.73  & 5.74  & 9.74 \\
				& 20    & 5.00  & 7.65  & 6.52  & 5.67  & 13.44 & 7.89  & 13.35 & 8.08  & 6.26  & 12.75 \\
				& 25    & 5.07  & 8.55  & 7.79  & 5.36  & 17.92 & 9.12  & 17.51 & 10.04 & 6.66  & 16.95 \\
				& 30    & 5.09  & 9.44  & 8.31  & 6.29  & 24.27 & 11.49 & 23.39 & 11.12 & 7.20  & 23.15 \\
				\hline
				\multirow{4}[0]{*}{100}   & 15    & 5.03  & 6.62  & 6.47  & 5.24  & 15.33 & 7.19  & 10.89 & 6.82  & 5.49  & 12.11 \\
				& 20    & 5.38  & 7.46  & 7.10  & 4.96  & 21.69 & 9.60  & 16.10 & 7.77  & 6.13  & 16.71 \\
				& 25    & 4.84  & 8.93  & 8.38  & 5.83  & 32.36 & 13.27 & 23.21 & 9.55  & 6.63  & 25.10 \\
				& 30    & 4.94  & 9.56  & 9.51  & 5.76  & 44.74 & 18.38 & 33.68 & 11.40 & 7.31  & 36.51 \\
				\hline
		\end{tabular*}}
	\end{center}
\end{table}

 From the above three simulation studies, in terms of level accuracy,  power,  and computational costs, the NEW test outperforms the other  competitors generally  and hence it should be recommended in real data analysis.

\section{Applications to functional and high-dimensional data}\label{real_data.sec}

In this section, we present the applications of the NEW test, together with several existing competitors mentioned in the previous section, to a functional data set and a high-dimensional data set. For the NEW test, we continue to use the Gaussian RBF kernel and choose the kernel width as described in the previous section.

%\mrk{({\bf to be completed carefully, may be also try a functional data set or a high-dimensional data set  at the same time, may in two separate subsections})
%{\bf Apply the tests considered in the simulation studies}
%The Tecator data set available in the R package fda.usc.  is used  in \citet[p.343]{LeeZhangShao2020Biometrika}.
%}

\subsection{Canadian weather data}

\begin{figure}[!h]
	\centering\includegraphics[width=0.8\linewidth]{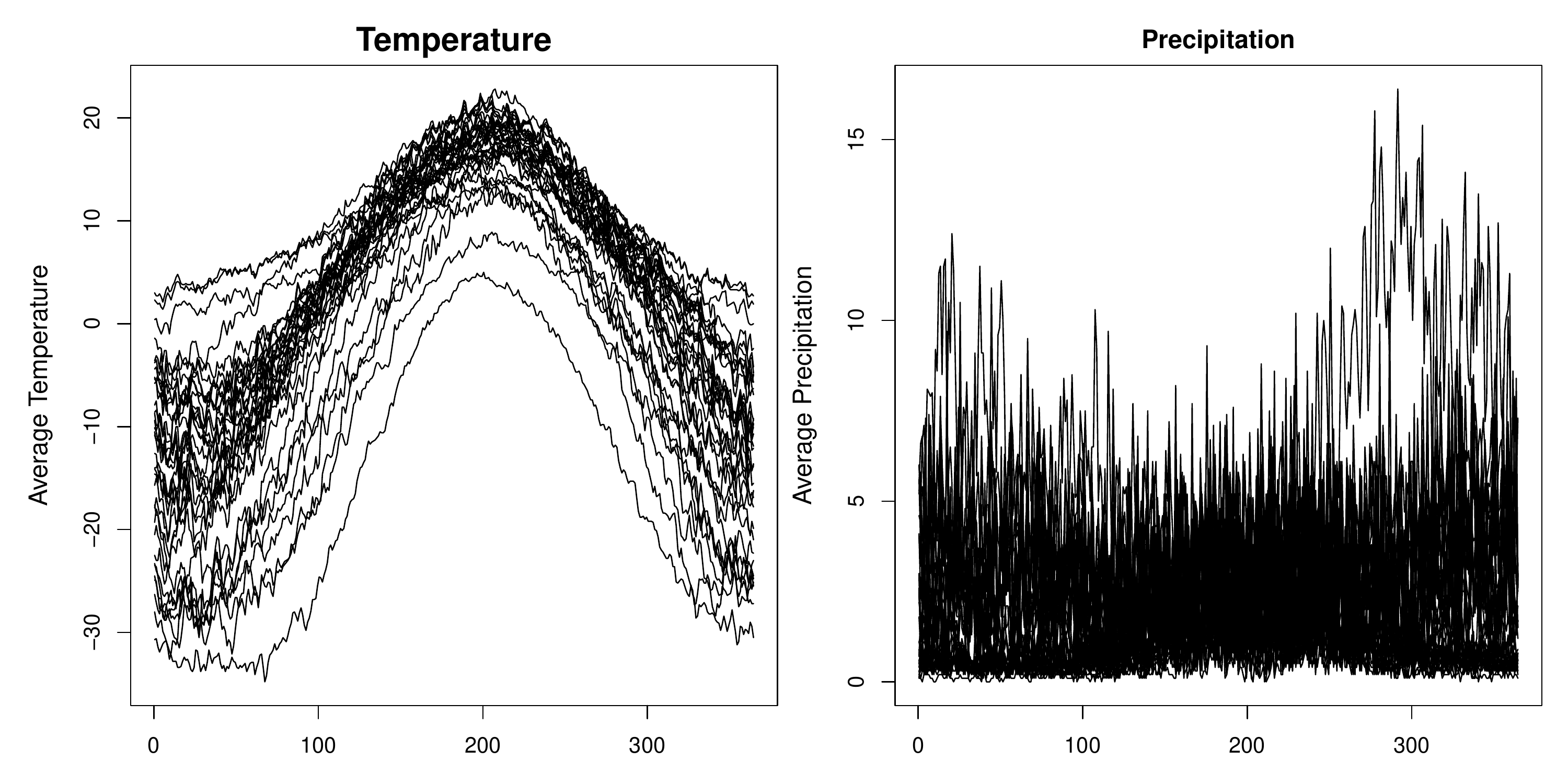}
	\caption{\em Raw temperature and precipitation curves for $35$ Canadian weather stations.}\label{Canadianplot}
\end{figure}

In this subsection, we illustrate the applications of the Pearson, dnm, FPCA, gtemp and NEW  tests to functional data using
the Canadian weather data set, briefly introduced in Section~\ref{intro.sec}. For each of the 35 weather stations, over a year period,  the variable ``Temperature''  records the average daily temperature  and the variable ``Precipitation'' records the average daily rainfall  rounded to $0.1$ mm.  The raw temperature and precipitation curves for the $35$ weather stations are presented in Figure~\ref{Canadianplot}. It is expected that there is some dependency between   the average daily temperature and the average daily precipitation since they  were recorded from the same $35$  Canadian weather stations. Of interest is to check how strong this dependency is.

 To this end, we apply the Pearson, dnm, FPCA, gtemp, and NEW test to this Canadian weather data set  to check whether the  temperature curves and the   precipitation curves are independent. The $p$-values of the five  tests are shown in Table~\ref{canawea.tab}. It is seen that  the $p$-values of all the five  tests are quite small and much smaller than 1\%,  suggesting  that there is some strong evidence to reject the null hypothesis, i.e.,  there is strong dependency  between the  temperature curves and the   precipitation curves for  the $35$ Canadian weather stations, as expected.
\begin{table}[!h]
	\caption{$p$-values of the Pearson, dnm, FPCA, gtemp, and NEW tests for testing the independence between the underlying temperature curves and the underlying  precipitation curves.\label{canawea.tab}}
	\smallskip
	\centering\small
	\scalebox{0.85}{
		\begin{tabular*}{\textwidth}{@{\extracolsep{\fill}}ccccc}
			\hline
			Pearson &dnm &FPCA &gtemp  &NEW\\
			\hline
			$2.73\times 10 ^{-3}$ &0 &$5.00\times 10^{-3}$ &0  &$8.29\times 10^{-7}$\\
			\hline
	\end{tabular*}}
\end{table}
\subsection{Colon data}

In this subsection, we illustrate the applications of the  HSICp,  HSICg, and NEW tests to  high-dimensional data using the well-known colon data set which   contains $62$  tissues, each having $2000$ gene expression levels, and can be downloaded from \url{http://microarray.princeton.edu/oncology/affydata/index.html}. In order to construct a two-sample test for independence for high-dimensional data, we choose  the first 31 tissues to form the first group, and the remaining  31 tissues to form the second group. Thus, the two groups should be independent since these tissues are independent. Table~\ref{colon.tab1} displays the $p$-values of the three considered tests, which are all larger than 50\%, showing that  there is no  evidence at all  to reject the null hypothesis, as expected.

\begin{table}[!h]
	\caption{$p$-values of the HSICp, HSICg, and NEW tests for testing the independence between the two groups of the colon data.\label{colon.tab1}}
	\smallskip
	\centering
	\scalebox{0.85}{
	\begin{tabular*}{\textwidth}{@{\extracolsep{\fill}}ccc}
		\hline
		HSICp & HSICg &NEW\\
		\hline
		0.956 &0.619 &0.560\\
		\hline
	\end{tabular*}}
\end{table}

To further demonstrate the level accuracy of the  NEW test against the HSICp and HSICg tests, a small scale simulation study based on this colon data set is conducted to simulate the empirical sizes
of the three tests,  obtained from $10,000$ simulation runs. In each run,  the 62 tissues are randomly  split  into two groups of equal-size. The empirical size of a test is calculated  as the proportion of times when  the $p$-value of the test is smaller than the nominal size  $\alpha=5\%$ or $10\%$ based on the $10,000$ runs. The empirical sizes of the three tests are displayed in Table~\ref{colon.tab2}. It is seen that both the  NEW and HSICp tests   have  good  level accuracy but the  HSICg test is rather conservative. This is consistent with the conclusions drawn from the simulation results presented in Tables~\ref{size1.tab} and~\ref{size2.tab}.

\begin{table}[!h]
	\caption{Empirical sizes (in \%) of the HSICp, HSICg, and NEW  tests obtained from the small scale simulation study.\label{colon.tab2}}
	\smallskip
	\centering
	\scalebox{0.85}{
	\begin{tabular*}{\textwidth}{@{\extracolsep{\fill}}cccc}
		\hline
	$\alpha$	&HSICp & HSICg &NEW\\
		\hline
	5\%	&4.96 &2.25 &6.80\\
	10\% &9.76 &5.71 &11.60\\
		\hline
	\end{tabular*}}
\end{table}

\section{Concluding remarks}\label{con.sec}

In the literature, several tests  have been proposed  for  two-sample independence test in separable metric spaces based on the Hilbert--Schmidt Independence Criterion (HSIC).
In this paper, we propose and study a new HSIC based independence  test in separable metric spaces with applications to functional and high-dimensional data. Under some regularity conditions and the null hypothesis, it is shown that the proposed test statistic asymptotically  has  a chi-squared-type mixture limit.  To conduct the proposed test, we employ the three-cumulant matched chi-squared-approximation of \cite{zhang2005approximate} to approximate the distribution of the chi-squared-mixture  with the approximation parameters consistently estimated from the data. Simulation studies and real data applications demonstrate that in terms of size control, power, and computational cost,  the proposed  test  outperforms several existing tests for multivariate, high-dimensional, and functional data. Nevertheless, Tables~\ref{size2.tab} and~\ref{size3.tab} also indicate that  the proposed  test is still somewhat liberal when the sample sizes are  small. Methods for further improving the level accuracy  of the proposed test are interesting and warranted.

\section*{Appendix\label{sec:Appendix}}

\setcounter{equation}{0} \global\long\def\theequation{A.\arabic{equation}}
\begin{proof}[Proof of Lemma~\ref{lem1.sec2}]
	With the expression of $\tK(x_i,x_j)$ given in (\ref{tK.sec2}) and $\tK^*(x_i,x_j)$ given in (\ref{tKstar.sec2}), for any fixed $i,j\in\{1,\ldots,n\}$, we have
	\[
	\begin{split}
		|\tK^*(x_i,x_j)-\tK(x_i,x_j)|&\leq|n^{-1}(n-1)^{-1}\sum_{u\neq v}K(x_u,x_v)-\E_{z,z'}[K(z,z')]|\\
		&+|(n-1)^{-1}\sum_{v\neq i}K(x_i,x_v)- \E_{z'}[K(x_i,z')]|+|(n-1)^{-1}\sum_{u\neq j}K(x_u,x_j)-\E_{z}[K(z,x_j)]|.
	\end{split}
	\]
	It follows that as $n\to\infty$, we have
	\[
	\begin{split}
		\E\{n[n^{-1}(n-1)^{-1}\sum_{u\neq v}K(x_u,x_v)-\E_{z,z'}[K(z,z')]]\}&=o(1),\;\mbox{ and }\\
		\Var\{n[n^{-1}(n-1)^{-1}\sum_{u\neq v}K(x_u,x_v)-\E_{z,z'}[K(z,z')]]\}&\leq\E[K(x,x')]^2\leq 16B_K^2.
	\end{split}
	\]
	Therefore, as $n\to\infty$, we have $|n^{-1}(n-1)^{-1}\sum_{u\neq v}K(x_u,x_v)-\E_{z,z'}[K(z,z')]|=\mathcal{O}(n^{-1})$. Similarly, we have $|(n-1)^{-1}\sum_{v\neq i}K(x_i,x_v)- \E_{z'}[K(x_i,z')]|=\mathcal{O}(n^{-1/2})$ and $|(n-1)^{-1}\sum_{u\neq j}K(x_u,x_j)-\E_{z}[K(z,x_j)]|=\mathcal{O}(n^{-1/2})$. Hence, as $n\to \infty$, we have
	\[
	\tK^*(x_i,x_j)=\tK(x_i,x_j)+\mathcal{O}(n^{-1/2})\; \mbox{uniformly for all } x_i,x_j.
	\]
	The lemma is then proved.
\end{proof}	

\begin{proof}[Proof of Theorem~\ref{tTndist2.thm}] Under the null hypothesis, $x$ and $y$ are independent. Then under the condition (\ref{cond.sec2}) and by (\ref{mercer.sec2}), we have
\[
\E[\tK(x,x)\tL(y,y)]=\E[\tK(x,x)]\E[\tL(y,y)]=\left(\sum\limits_{r=1}^{\infty}\lambda_r\right)\left(\sum\limits_{s=1}^{\infty}\rho_s\right)<\infty,
 \]
 and
 \[
 \E[\tK(x,x')\tL(y,y')]^2= \E[\tK(x,x')]^2\E[\tL(y,y')]^2=\left(\sum\limits_{r=1}^{\infty}\lambda_r^2\right)\left(\sum\limits_{s=1}^{\infty}\rho_s^2\right)<\infty.
 \]
 Furthermore, we have
 \[
 \begin{array}{rcl}
 \tK(x,x')\tL(y,y')&=&\sum\limits_{r=1}^{\infty}\sum\limits_{s=1}^{\infty} \lambda_r\rho_s \phi_r(x)\phi_r(x') \psi_s(y)\psi_s(y')\\
 &=&\sum\limits_{r=1}^{\infty}\sum\limits_{s=1}^{\infty} \lambda_r\rho_s [\phi_r(x)\psi_s(y)][\phi_s(x')\psi_s(y')],
 \end{array}
 \]
 where $(x',y')$ is an independent copy of $(x,y)$. Since $x$ and $y$ are independent, we can show that $\phi_r(x)\psi_s(y)$ are orthonormal. In fact, we have
 \[
 \begin{split}
 \E\{[\phi_r(x)\psi_s(y)][\phi_{\alpha}(x)\psi_{\beta}(y)]\}&=\E\{[\phi_r(x)\phi_{\alpha}(x)][\psi_s(y)\psi_{\beta}(y)]\}\\
 &=\E[\phi_r(x)\phi_{\alpha}(x)]\E[\psi_s(y)\psi_{\beta}(y)]=\delta_{r\alpha}\delta_{s\beta},
 \end{split}
 \]
 which takes $1$ if $(r,s)=(\alpha,\beta)$ and $0$ otherwise. If follows that $\lambda_r\rho_s, \; r,s=1,2,\ldots$ are the eigen-elements of $\tK(x,x')\tL(y,y')$, associated with the eigen-functions
 $\phi_r(x)\psi_s(y), r,s=1,2,\dots$.

We can rewrite $\tT_n$ as
\[
\tT_{n}=n^{-1}\sum\limits_{i=1}^n[\tK(x_i,x_i)\tL(y_i,y_i)]+(n-1)\binom{n}{2}^{-1}\sum\limits_{1\le i<j\le n} [\tK(x_i,x_j)\tL(y_i,y_j)].
 \]
 By the law of large numbers, as $n\to\infty$,  we have $n^{-1}\sum\limits_{i=1}^n[\tK(x_i,x_i)\tL(y_i,y_i)]\convas \E[\tK(x,x)\tL(y,y)]=\left(\sum\limits_{r=1}^{\infty} \lambda_r\right)\left(\sum\limits_{s=1}^{\infty} \rho_s\right)$. Further, since \[
 \E_{x,y}[\tK(x,x')\tL(y,y')]=\E_{x',y'}[\tK(x,x')\tL(y,y')]=\E_{x,y,x',y'}[\tK(x,x')\tL(y,y')]=0
  \]
  and $\E[\tK(x,x')\tL(y,y')]^2=\left(\sum\limits_{r=1}^{\infty} \lambda_r^2\right)\left(\sum\limits_{s=1}^{\infty} \rho_s^2\right)<\infty$, by the U-statistics theorem of \citet[p. 194]{serfling}, as $n\to\infty$, we have
$
(n-1)\binom{n}{2}^{-1}\sum\limits_{1\le i<j\le n} [\tK(x_i,x_j)\tL(y_i,y_j)]\convL \sum\limits_{r=1}^{\infty}\sum\limits_{s=1}^{\infty}\lambda_r\rho_s(A_{rs}-1), \;A_{rs}\iidsim \chi_1^2.
$
It follows that $\tT_{n}\convL \tT$ where $\tT=\sum\limits_{r=1}^{\infty} \sum\limits_{s=1}^{\infty}\lambda_r\rho_s+\sum\limits_{r=1}^{\infty}\sum\limits_{s=1}^{\infty}\lambda_r\rho_s(A_{rs}-1)=
\sum\limits_{r=1}^{\infty}\sum\limits_{s=1}^{\infty}\lambda_r\rho_s A_{rs}, \;\; \;A_{rs}\iidsim \chi_1^2.$

Since by Lemma~\ref{lem1.sec2}, we have $T_n=\tT_n+\mathcal{O}(n^{-1/2})$. This means that  $T_n$ and $\tT_n$ have the same asymptotic distribution. That is, $T_n\convL \tT$ with $\tT$ defined above.
\end{proof}

\begin{proof}[Proof of Theorem~\ref{muvar2.thm}] By Lemma~\ref{lem1.sec2}, we have $T_n=\tT_n+\mathcal{O}(n^{-1/2})$. It follows that $\E(T_n)=\E(\tT_n)+\mathcal{O}(n^{-1/2})$, $\Var(T_n)=\Var(\tT_n)+\mathcal{O}(n^{-1/2})$, and $\E[T_n-\E(T_n)]^3=\E[\tT_n-\E(\tT_n)]^3+\mathcal{O}(n^{-1/2})$. Thus, we just need to find $\E(\tT_n), \Var(\tT_n)$, and $\E[\tT_n-\E(\tT_n)]^3$. Note that
\[
\tT_n=n^{-1}\limsum_{i=1}^n[\tK(x_i,x_i)\tL(y_i,y_i)]+2n^{-1}\limsum_{1\le i<j\le n}[\tK(x_i,x_j)\tL(y_i,y_j)].
\]
Since  $\E_x[\tK(x,x')]=\E_{x'}[\tK(x,x')]=\E_{x,x'}[\tK(x,x')]=0$ and $\E_y[\tL(y,y')]=\E_{y'}[\tL(y,y')]=\E_{y,y'}[\tK(y,y')]=0$, we have
\[
\E(\tT_n)=n^{-1}\limsum_{i=1}^n\E[\tK(x_i,x_i)\tL(y_i,y_i)]=\E[\tK(x,x)]\E[\tL(y,y)]=M_1N_1.
\]
It follows that
\[
\begin{array}{rcl}
\tT_n-\E(\tT_n)&=&n^{-1}\limsum_{i=1}^n\{\tK(x_i,x_i)\tL(y_i,y_i)-M_1N_1\}+2n^{-1}\limsum_{1\le i<j\le n}[\tK(x_i,x_j)\tL(y_i,y_j)]\\
               &=&n^{-1}\limsum_{i=1}^n A_i+2n^{-1}\limsum_{1\le i<j\le n}B_{ij}.
\end{array}
\]
Note that $\E(A_i)=0$ and $A_1,A_2,\cdots,A_n$ are i.i.d. and under the null hypothesis, we have
\[
\E_{x_iy_i}(B_{ij})=\E_{x_jy_j}(B_{ij})=\E(B_{ij})=0.
\]
Thus
\[
\begin{array}{rcl}
&&\Var(\tT_n)=n^{-1}\Var(A_1)+4n^{-2}\limsum_{1\le i<j\le n}\E(B_{ij}^2)\\
           &=&n^{-1}\Var[\tK(x,x')]\Var[\tL(y,y')]+2(1-n^{-1})\E[\tK^2(x,x')]\E[\tL^2(y,y')]\\
           &=&2M_2N_2+\mathcal{O}(n^{-1}).
\end{array}
\]
%\[
%\Var(\tK(x,x'))=\E(\tK^2(x,x'))-\E^2(\tK(x,x'))=M_2
%\]
%
%Hence,
%\[
%Var(\tT_n)=2M_2N_2-n^{-1}M_2N_2.
%\]
Finally,
\[
\begin{array}{rcl}
 \E[\tT_n-\E(\tT_n)]^3&=&n^{-3}\E\left(\limsum_{i=1}^n A_i\right)^3+8n^{-3}\E\left(\limsum_{1\le i<j\le n}B_{ij}\right)^3\\
           &=&n^{-2}\E(A_1^3)+8n^{-3}\E\left(\limsum_{1\le i<j\le n}B_{ij}\right)^3,
\end{array}
\]
where $\E(A_1^3)=\E\left[\tK(x,x)\tL(y,y)-M_1N_1\right]^3$ and
\[
\begin{array}{rcl}
   \E\left(\limsum_{1\le i<j\le n}B_{ij}\right)^3&=&\E\left[\sum\limits_{i<j}B_{ij}^3+3\sum^{*}B_{ij}^2B_{\alpha\beta}+6\sum^{**} B_{ij} B_{\alpha\beta}B_{uv}\right]\\
               &=&\left\{\frac{n(n-1)}{2}\E[B_{12}^3]
               +6\frac{n(n-1)(n-2)}{3!}\E[B_{12}B_{23}B_{31}]\right\}\\
               &=&n(n-1)(n-2)M_3N_3+\frac{n(n-1)}{2}\E[\tK^3(x,x')]\E[\tL^3(y,y')],
\end{array}
\]
where  $*$ means ``$i<j, \alpha<\beta$'' and ``$(i,j)\neq (\alpha,\beta)$'' while $**$ means ``$i<j, \alpha<\beta, u<v$'' and ``$(i,j), (\alpha, \beta), (r,s)$ are not  mutually equal to each other.''
It follows that  $\E[\tT_n-\E(\tT_n)]^3=8M_3N_3+\mathcal{O}(n^{-1})$. The theorem is then proved.
%%%%
%\[
%\begin{split}
%\E(A_1^3)&=\E\left[\tK(x,x)\tL(y,y)-M_1N_1\right]^3\\
%&=\E\left[\tK^3(x,x)\tL^3(y,y)-3\tK^2(x,x)\tL^2(y,y)M_1N_1+3\tK(x,x)\tL(y,y)M_1^2N_1^2-M_1^3N_1^3\right]\\
%&=\E\left[\tK^3(x,x)\right]\E\left[\tL^3(y,y)\right]-3M_1N_1\E\left[\tK^2(x,x)\right]\E\left[\tL^2(y,y)\right]+2M_1^3N_1^3
%\end{split}
%\]
%\[
%\begin{split}
%	\E[\tT_n-\E(\tT_n)]^3&=n^{-2}\left\{\E\left[\tK^3(x,x)\right]\E\left[\tL^3(y,y)\right]-3M_1N_1\E\left[\tK^2(x,x)\right]\E\left[\tL^2(y,y)\right]+2M_1^3N_1^3\right\}\\
%	&+4n^{-2}(n-1)\E[\tK^3(x,x')]\E[\tL^3(y,y')]+8n^{-2}(n-1)(n-2)M_3N_3.
%	\end{split}
%\]
%
%Estimators:
%\[
%\widehat{\E\left[\tK^3(x,x)\right]}=n^{-1}\sum_{i=1}^n[\tK^*(x_i,x_i)]^3=\tr(\tK^*o\tK^*o\tK^*)/n
%\]
%\[
%\widehat{\E\left[\tK^2(x,x)\right]}=n^{-1}\sum_{i=1}^n[\tK^*(x_i,x_i)]^2=\tr(\tK^*o\tK^*)/n
%\]
%\[
%\widehat{\E\left[\tK^3(x,x')\right]}=\frac{1}{n(n-1)}\sum_{i\neq j}^n[\tK^*(x_i,x_j)]^3
%\]
\end{proof}

\begin{proof}[Proof  of (\ref{M23.sec2})] Notice that
\[
\left\{\limsum_{i=1}^n\limsum_{j=1}^n [\tK^*(x_i,x_j)]^2-\limsum_{i=1}^n[\tK^*(x_i,x_i)]^2\right\}=\tr(\tbK^{*2})-\tr(\tbK^*o\tbK^*).
\]
It follows that $\hM_2=[n(n-1)]^{-1}[\tr(\tbK^{*2})-\tr(\tbK^*o\tbK^*)]$. Further, we have
\[
\begin{array}{rcl}
&&\qquad \qquad \limsum\limits_{1\le i<j<k\le n}[\tK^*(x_i,x_j)\tK^*(x_j,x_k)\tK^*(x_k,x_i)]\\
&=&\limsum_{i=1}^n\limsum_{j=1}^n\limsum_{\ell=1}^n\tK^*(x_i,x_j)\tK^*(x_j,x_k)\tK^*(x_k,x_i)-\limsum_{i=1}^n[\tK^*(x_i,x_i)]^3-3\limsum_{i\neq j}\tK^*(x_i,x_j)\tK^*(x_j,x_i)\tK^*(x_i,x_i)\\
&=&\tr(\tbK^{*3})-\tr(\tbK^*o\tbK^*o\tbK^*)-3\left\{\limsum_{i=1}^n\limsum_{j=1}^n\tK^*(x_i,x_j)\tK^*(x_j,x_i)\tK^*(x_i,x_i)-\limsum_{i=1}^n[\tK^*(x_i,x_j)]^3\right\}\\
&=&\tr(\tbK^{*3})-\tr(\tbK^*o\tbK^*o\tbK^*)-3\left\{\tr[\diag(\tbK^*)\tbK^{*2}]-\tr(\tbK^*o\tbK^*o\tbK^*)\right\}\\
&=&\tr(\tbK^{*3})-3\tr[\diag(\tbK^*)\tbK^{*2}]+2\tr(\tbK^*o\tbK^*o\tbK^*).
\end{array}
\]
It follows that $\hM_3=[n(n-1)(n-2)]^{-1}\left\{\tr(\tbK^{*3})-3\tr[\diag(\tbK^*)\tbK^{*2}]+2\tr(\tbK^*o\tbK^*o\tbK^*)\right\}$.
\end{proof}

\begin{proof}[Proof of Theorem~\ref{hcoefs.thm}] Under the condition (\ref{cond.sec2}),  by Lemma~\ref{lem1.sec2},
	%Theorem~4.1 of \citet{zhang2020Smaga},
	we have $\tilde{K}^*(x_i,x_j)=\tilde{K}(x_i,x_j)+\mathcal{O}(n^{-1/2})$ uniformly for all $x_i,x_j$'s. Since $|\tilde{K}(x,x')|\le 4B_K<\infty$ for all $x,x'\in\mathcal{X}$, by (\ref{hMNexp1.sec2}),  we have
\[
\hM_1=\tilde{M}_1+\mathcal{O}(n^{-1/2}),\;\;
\hM_2=\tilde{M}_2+\mathcal{O}(n^{-1/2}),\;\;\mbox{ and }\;\;
\hM_3=\tilde{M}_3+\mathcal{O}(n^{-1/2}),
\]
where
\[
\begin{array}{c}
\tilde{M}_1=\frac{1}{n}\sum\limits_{i=1}^n \tK(x_i,x_i),\;\;
\tilde{M}_2=\frac{2}                                                                                                                                                                           {n(n-1)}\sum_{1\le i<j\le n} \tilde{K}^2(x_i,x_j),\\
\tilde{M}_3=\frac{6}{n(n-1)(n-2)} \sum_{1\le i<j<k\le n}\tilde{K}(x_i,x_j)\tilde{K}(x_j,x_k)\tilde{K}(x_k,x_i).
\end{array}
\]
Since $\tilde{M}_1, \tilde{M}_2, \tilde{M}_3$ are U-statistics for $M_1, M_2, M_3$ respectively and under the condition (\ref{cond.sec2}),  we have
\[
\begin{array}{c}
\E[\tK(x,x)]^2\le  (4B_K)^2<\infty,\;\;
\E\left[\tilde{K}(x,x')\right]^4\le (4B_K)^4<\infty,\\
\E\left[\tilde{K}(x,x')\tilde{K}(x', x'')\tilde{K}(x'',x)\right]^2\le (4B_K)^6<\infty.
\end{array}
\]
Then by Lemma A of \cite[p.185]{serfling}, as $n\to\infty$,  we have $\tilde{M}_1\stackrel{p}{\longrightarrow} M_1, \tilde{M}_2\stackrel{p}{\longrightarrow} M_2$, and   $\tilde{M}_3\stackrel{p}{\longrightarrow} M_3$.
It follows that as $n\to\infty$, we have $\hM_1\stackrel{p}{\longrightarrow} M_1, \tilde{M}_2\stackrel{p}{\longrightarrow} M_2$, and $\tilde{M}_3\stackrel{p}{\longrightarrow} M_3$. Thus,
as $n\to\infty$, we have $\hat{M}_{\ell} \stackrel{p}{\longrightarrow} M_{\ell}, \ell=1, 2,3$. Similarly, we can show that as $n\to\infty$, we have $\hat{N}_{\ell} \stackrel        {p}{\longrightarrow} N_{\ell}, \ell=1, 2,3$. The remaining claims then follow. The theorem is complete.
\end{proof}

\begin{proof}[Proof of Theorem~\ref{power.thm}] Under the condition (\ref{cond.sec2}), by Lemma~\ref{lem1.sec2}, we have $T_n=\tT_n+\mathcal{O}(n^{-1/2})$ and hence $T_n/n=\tT_n/n+\mathcal{O}(n^{-3/2})$. It follows that $\E(T_n/n)=\E(\tT_n/n)+\mathcal{O}(n^{-3/2})=\HSIC+\mathcal{O}(n^{-3/2})$. Thus, $\sqrt{n}\left(T_n/n-\HSIC\right)=\sqrt{n}\left(T_n/n-\HSIC\right)+\mathcal{O}(n^{-1})$. Now
\[
\begin{array}{rcl}
\tT_n/n&=&n^{-2}\limsum_{i=1}^n[\tK(x_i,x_i)\tL(y_i,y_i)]+\frac{n-1}{n}\binom{n}{2}^{-1}\limsum_{1\le i<j\le n}[\tK(x_i,x_j)\tL(y_i,y_j)]\\
&=&\binom{n}{2}^{-1}\limsum_{1\le i<j\le n}[\tK(x_i,x_j)\tL(y_i,y_j)]+\mathcal{O}(n^{-1}).
\end{array}
\]
Let $(x',y')$ be an independent copy of $(x,y)$. Since $\E[\tK(x,x')\tL(y,y')]=\HSIC>0$, $x$ and $y$ are not independent. It follows that $\E[\tK(x,x')\tL(y,y')|(x',y')]\neq 0$ and
$\sigma^2=\Var\left\{\E[\tK(x,x')\tL(y,y')|(x,y)]\right\}>0$, by \citet[Theorem A, Sec. 5.5.1]{serfling}, as $n\to\infty$, we have
$
\sqrt{n}\left(\tT_n/n-\HSIC\right)\convL N(0,4\sigma^2).
$
It follows that $\sqrt{n}\left(T_n/n-\HSIC\right)\convL N(0,4\sigma^2)$. Under the condition (\ref{cond.sec2}), by Theorem~\ref{hcoefs.thm}, we have $\hbeta_0\convp \beta_0, \hbeta_1\convp \beta_1$ and $\hd\convp d$. Let $\Phi(\cdot)$ denote the cumulative distribution function of the standard normal distribution.
Therefore, under the condition (\ref{cond.sec2}) and the local alternative (\ref{localt.sec2}), when $n$ is large, for  any significance level $\alpha$,  we have
\[
\begin{array}{rcl}
Pr\left[T_n\ge \hbeta_0+\hbeta_1\chi_{\hd}^2(\alpha)\right]&=&Pr\left[ \frac{\sqrt{n}\left(T_n/n-\HSIC\right)}{2\sigma}
\ge \frac{\hbeta_0+\hbeta_1\chi_{\hd}^2(\alpha)}{2\sqrt{n}\sigma}-\frac{\sqrt{n}\HSIC}{2\sigma}\right]\\
                     &=&\Phi\left[-\frac{\beta_0+\beta_1\chi_{d}^2(\alpha)}{2\sqrt{n}\sigma}+\frac{n^{\Delta} h}{2\sigma} \right][1+o(1)]\\
                     &=&\Phi\left[n^{\Delta} h/(2\sigma) \right][1+o(1)],
\end{array}
\]
which tends to $1$ as $n\to\infty$. The theorem is proved.
\end{proof}

\bibliographystyle{apalike}
\bibliography{two_sam}

\end{document}

\begin{table}[!h]
	\caption{Empirical sizes and powers (in $\%$) of  HSICp, HSICg, and NEW in  Simulation 1.}\label{size1.tab}	
	\begin{center}
		%	\setlength{\tabcolsep}{3pt}	
		%	\scalebox{0.7}{
			\begin{tabular}{ccccccccccc}
				\hline
				\multicolumn{2}{c}{}      & \multicolumn{3}{c}{$\theta=0$} & \multicolumn{3}{c}{$\theta=\pi/8$} & \multicolumn{3}{c}{$\theta=\pi/4$} \\
				\hline
				$n$     & $p$     & HSICp    & HSICg  & NEW    & HSICp    & HSICg  & NEW    & HSICp    & HSICg  & NEW\\
				\hline
				\multirow{4}{*}{30} & 2     & 4.55  & 4.74  & 4.93  & 8.36  & 8.72  & 9.17  & 13.44 & 13.68 & 14.35 \\
				& 4     & 5.27  & 3.41  & 5.84  & 6.21  & 4.27  & 6.76  & 7.43  & 5.28  & 8.27 \\
				& 10    & 4.72  & 0.07  & 5.56  & 5.22  & 0.06  & 6.06  & 5.30  & 0.07  & 6.24 \\
				& 20    & 4.94  & 0.00  & 5.97  & 5.34  & 0.00  & 6.26  & 5.26  & 0.00  & 6.36 \\
				\hline
				\multirow{4}{*}{50} & 2     & 5.00  & 5.29  & 5.20  & 11.68 & 12.28 & 12.31 & 21.86 & 22.56 & 22.49 \\
				& 4     & 4.88  & 3.89  & 5.22  & 6.76  & 5.39  & 6.97  & 9.53  & 7.93  & 10.01 \\
				& 10    & 5.28  & 0.35  & 5.50  & 5.50  & 0.48  & 5.84  & 6.08  & 0.51  & 6.80 \\
				& 20    & 5.13  & 0.00  & 5.69  & 5.21  & 0.01  & 5.83  & 5.34  & 0.01  & 6.01 \\
				\hline
				\multirow{4}{*}{100} & 2     & 4.84  & 5.14  & 4.84  & 21.44 & 22.05 & 21.38 & 39.50 & 40.28 & 39.61 \\
				& 4     & 4.85  & 4.64  & 4.98  & 9.29  & 8.99  & 9.61  & 15.59 & 15.11 & 15.81 \\
				& 10    & 5.08  & 1.67  & 5.22  & 5.66  & 1.93  & 5.99  & 7.07  & 2.35  & 7.36 \\
				& 20    & 5.19  & 0.00  & 5.39  & 5.47  & 0.00  & 5.83  & 5.30  & 0.01  & 5.58 \\
				\hline
			\end{tabular}%}
	\end{center}
\end{table}
\begin{table}[!h]
	\caption{Computational costs (in seconds) of HSICp, HSICg, and NEW in  Simulation 1 when $\theta=\pi/8$ for $N=10,000$ simulation runs.}\label{time1.tab}	
	\begin{center}
		\begin{tabular}{ccccccccccc}
			\hline
			&	\multicolumn{3}{c}{$n=30$} &\multicolumn{3}{c}{$n=50$} &\multicolumn{3}{c}{$n=100$}\\
			\hline
			$p$     & HSICp    & HSICg  & NEW      & HSICp    & HSICg  & NEW      & HSICp    & HSICg  & NEW \\
			\hline
			2     & 767.88 & 5.45  & 7.86 & 1962.08 & 6.40  & 4.10 & 8084.76 & 10.82 & 15.02\\
			4     & 799.71 & 5.60  & 1.84 & 2062.76 & 6.59  & 4.22& 8334.66 & 11.15 & 14.99  \\
			10    & 937.37 & 5.67  & 2.06 & 2370.27 & 6.82  & 4.70  & 9733.84 & 12.23 & 17.29\\
			20    & 1206.12 & 5.93  & 2.39 & 3063.88 & 7.70  & 5.38 & 12296.62 & 14.10 & 19.54\\
			\hline
		\end{tabular}%
	\end{center}
\end{table}